\newtheorem{definition}{Definition}
\newtheorem{theorem}{Theorem}
\newtheorem{lemma}{Lemma}
\newtheorem{corollary}{Corollary}
\title{Deterministic implementation in single-item auctions}
\author{
Yan Liu$^1$ \and
Zeyu Ren$^1$ \and
Pingzhong Tang$^{2}$ \and
Zihe Wang$^1$ \and
Yulong Zeng$^3$ \And
Jie Zhang$^4$
\affiliations
$^1$Renmin University of China \\
$^2$Tsinghua University \\
$^3$Tsinghua University \\
$^4$University of bath \\
\emails
liuyan5816@ruc.edu.cn,
zeyuren@ruc.edu.cn,
kenshinping@gmail.com,
wang.zihe@ruc.edu.cn,
freeof123@qq.com,
jz2558@bath.ac.uk
}
\begin{document}

\maketitle

\begin{abstract}
    Deterministic auctions are attractive in practice due to their transparency, simplicity, and ease of implementation, motivating a sharper understanding of when they can attain the same outcomes as randomized mechanisms.
    We study deterministic implementation in single-item auctions under two notions of outcomes: (revenue, welfare) pairs and interim allocations.
    For (revenue, welfare) pairs, we show a separation in discrete settings: there exists a pair implementable by a deterministic Bayesian incentive-compatible (BIC) auction but not by any deterministic dominant-strategy incentive-compatible (DSIC) auction.
    For continuous atomless priors, we identify conditions under which deterministic DSIC auctions are equivalent to randomized BIC auctions in terms of achievable outcomes.
    For interim allocations, under a strict monotonicity condition, we establish a deterministic analogue of Border's theorem for two bidders, providing a necessary and sufficient condition for deterministic DSIC implementability.
    Using this characterization, we exhibit an interim allocation implementable by a randomized BIC auction but not by any deterministic DSIC auction.
\end{abstract}

\section{Introduction}
The central objective of mechanism design is to identify and characterize the set of outcomes that can be implemented by suitable mechanisms.
Since the foundational contributions of \cite{hurwicz1960optimality,hurwicz1972informationally}, a substantial literature has developed across diverse market environments and under multiple solution concepts \cite{clarke1971multipart,dobzinski2022hardness,feldman2023simultaneous,groves1973incentives,myerson1979incentive,myerson1981optimal,myerson1982optimal,vickrey1961counterspeculation,you2021incentive}.

This paper advances that agenda in the setting of single-item auctions by studying the implementability of deterministic auctions, where allocation and payment rules are deterministic.
Determinism is important in many applications, including antique auctions, spectrum sales, and platform marketplaces, where transparency, compliance, or operational simplicity may rule out randomization \cite{klemperer2004auctions,liu2024optimal}.
Recent work on single-item auctions shows that, for revenue maximization, optimal auctions can often be chosen as deterministic DSIC mechanisms in discrete settings, despite allowing randomized BIC mechanisms \cite{giannakopoulos2024discrete}.
These optimality results, however, leave open a complementary question that is central to implementability: beyond the outcomes induced by optimizing a fixed objective, when does determinism preserve (or fail to preserve) the full set of achievable trade-offs and reduced forms?
A precise account of the power and limitations of deterministic auctions is therefore both practically relevant and theoretically significant.

We take two complementary views.
First, an outcome-space view that charts the set of achievable (revenue, welfare) pairs, together with its Pareto frontier.
Second, a reduced-form view that characterizes the interim allocations implementable by deterministic mechanisms.
These two views allow us to assess determinism at the level of aggregate trade-offs and at the level of allocation structure.

\subsection{Implementation with Respect to (Revenue, Welfare) Pairs}
Our first objective is to characterize which (revenue, welfare) pairs are attainable by deterministic DSIC auctions.
For randomized auctions, the picture is well understood: one first identifies the extreme points and then takes convex combinations to generate the full attainable set.
\cite{myerson1983efficient} characterize the extreme points for the single-buyer case.
\cite{likhodedov2003mechanism} extend this to multiple buyers.
This program does not carry over to deterministic auctions, because a convex combination of two deterministic auctions is not necessarily deterministic.
As a result, the results and techniques of \cite{likhodedov2003mechanism} do not apply, and the set of extreme points for deterministic auctions is not continuous.

We show that deterministic DSIC and deterministic BIC are not equivalent with respect to (revenue, welfare) pairs.
There exists a pair that is implementable by a deterministic BIC auction but not by any deterministic DSIC auction.
Prior results on revenue inequivalence between DSIC and BIC do not apply here, since they are confined to two-item environments \cite{tang2016optimal,yao2017dominant}.
To further delineate the boundary of this separation, we identify a restricted class of distributions under which randomized BIC auctions and deterministic DSIC auctions remain equivalent with respect to (revenue, welfare) pairs.
In the continuous case, we obtain two levels of equivalence: under regularity we match the Pareto frontier, and under stronger distributional conditions (in the i.i.d. case) we match the entire attainable region.

\subsection{Implementation with Respect to Interim Allocations}
Our second objective is to characterize which interim allocations can be implemented by deterministic auctions.
Here, ``interim allocation'' refers to the expected allocation probability, taken over the randomness in the other players' types, in the same sense as defined in \cite{myerson1981optimal}.
\cite{border1991implementation} gives a necessary and sufficient condition, now called Border's theorem, for when an interim allocation (the ``reduced form'' in Border's terminology) is implementable by a randomized BIC auction.
Building on this, \cite{manelli2010bayesian} show that any interim allocation of a randomized BIC auction is also implementable by a randomized DSIC auction.
Further, \cite{chen2019equivalence} prove that for continuous atomless priors, every interim allocation implementable by a randomized BIC auction is also implementable by a deterministic BIC auction.
To the best of our knowledge, no analogous result is known for deterministic DSIC auctions.
We show that an analogous equivalence fails for deterministic DSIC auctions.
To show this separation, we introduce new techniques, most notably a deterministic version of Border's theorem.

\subsection{Our Results}
We study the power and limitations of deterministic auctions in single-item environments through two outcome notions: (revenue, welfare) pairs and interim allocations.
\begin{itemize}
    \item Revenue-welfare outcomes.
    We characterize which (revenue, welfare) pairs are attainable by deterministic DSIC auctions.
    In discrete settings, we exhibit a counterexample (via computational enumeration) showing a pair that is implementable by a deterministic BIC auction but not by any deterministic DSIC auction, establishing a non-equivalence between the two.
    In contrast, under continuous regular priors, every Pareto-optimal (revenue, welfare) pair achievable by randomized BIC auctions is implementable by a deterministic DSIC auction.
    Moreover, in the i.i.d. continuous case, under Condition (\ref{thrid_order}), every (revenue, welfare) pair implementable by randomized BIC auctions is also implementable by a deterministic DSIC auction.
    \item Interim allocations.
    We provide a deterministic analogue of Border's theorem for two bidders under a strict monotonicity condition, yielding a necessary and sufficient condition under which an interim allocation is implementable by a deterministic DSIC auction.
    Using this characterization, we show that randomized BIC and deterministic DSIC auctions are not equivalent.
    For three bidders with constant interim allocations, we further derive a necessary and sufficient condition for implementability under deterministic DSIC.
\end{itemize}

\subsection{Related Work}
\cite{diakonikolas2012efficiency} define the subset of (revenue, welfare) pairs that are undominated in both coordinates as Pareto points, and call their collection the Pareto curve, i.e., the upper-right boundary of the attainable set.
They show that even with two buyers facing asymmetric discrete priors, this curve need not be concave, and that deciding whether a given point is achievable by a deterministic auction is NP-hard.
In contrast, we prove that under continuous and regular priors, the Pareto frontier of randomized BIC mechanisms coincides with that of deterministic DSIC mechanisms.

\cite{arigapudi2018equivalence} studies the related implementability problem in a symmetric environment.
They give necessary and sufficient conditions under which the interim allocation of a symmetric BIC auction is implementable by a symmetric DSIC auction.
We consider the two-buyer case without symmetry and our results specialize to theirs.
Moreover, our theorem (Theorem \ref{Thm:Border_deterministic}) does not assume that the target interim allocation is implementable by any deterministic BIC mechanism.
On the randomized side, \cite{border2007reduced} extend Border's theorem to asymmetric settings, and \cite{che2013generalized} provide a clean network-flow proof.
For deterministic implementability, \cite{bikhchandani2006weak} show that a deterministic social choice function over multi-dimensional types is DSIC if and only if it is weakly monotone.
\cite{gershkov2013equivalence} further broaden equivalence results for randomized BIC and randomized DSIC mechanisms.

The study of attainable (revenue, welfare) outcomes is a recurring theme in auction theory (e.g., \cite{bergemann2015limits,ma2022revenue}).
\cite{hartline2008optimal} design an auction that maximizes buyer surplus (welfare minus revenue), which corresponds to an extreme point of the outcome space.
\cite{pai2009competing} analyzes competition among sellers, yielding a revenue-utility trade-off for buyers.
\cite{kleinberg2013ratio} establish lower bounds on the revenue-to-welfare ratio under various feasibility constraints.
In a similar spirit, \cite{bachrach2014optimising} optimize a composite objective ``$\lambda_1 \cdot \mathrm{revenue}$ + $\lambda_2 \cdot \mathrm{welfare}$ + $\lambda_3 \cdot \mathrm{click ~ yield}$'' to balance stakeholder goals, and \cite{shen2017practical} propose a parametric family that optimizes a linear combination of revenue and welfare inside monotone transformations, achieving near-optimal revenue with practical implementability.

\cite{giannakopoulos2024discrete} study discrete single-item optimal auction design and show that, for revenue maximization (and more generally linear objectives), an optimal mechanism can be chosen deterministic DSIC via an LP-duality/KKT characterization.
Our paper is complementary: instead of identifying an optimizer for a fixed objective, we characterize which (revenue, welfare) trade-offs and interim allocations are implementable under determinism, and we obtain separations between deterministic BIC and deterministic DSIC.
Since their techniques are tailored to optimality for a given objective, they do not imply our implementability questions or preclude the separations we establish.

\section{Preliminaries}\label{sec:Preliminary}
There are $n$ buyers competing for a single indivisible item.
Let $[n] = \{1,2, \cdots, n\}$ represent the set of buyers.
Buyer $i$ has a private valuation $v_i \in \mathbb{R}_{\ge 0}$, drawn independently from a (not necessarily identically distributed) prior distribution $F_i$ with density function $f_i$.
We use $\boldsymbol{v} = (v_1, v_2, \cdots, v_n)$ to denote the valuation profile of all $n$ buyers and $\boldsymbol{v}_{-i} = (v_1, \cdots, v_{i-1}, v_{i+1}, \cdots, v_n)$ to denote the valuation profile of all buyers except buyer $i$.
Each buyer $i$ submits a bid $b_i \in \mathbb{R}_{\ge 0}$, and the bid profile is given by $\boldsymbol{b} = (b_1, b_2, \cdots, b_n)$, with $\boldsymbol{b}_{-i} = (b_1, \cdots, b_{i-1}, b_{i+1}, \cdots, b_n)$ representing the bid profile of all buyers except buyer $i$.
Throughout, we restrict attention to truthful (incentive-compatible) mechanisms.
We use $x_i(\boldsymbol{v}): \mathbb{R}_{\ge 0}^n \rightarrow [0, 1]$, $p_i(\boldsymbol{v}): \mathbb{R}_{\ge 0}^n \rightarrow \mathbb{R}_{\ge0}$ to denote the ex-post allocation and the ex-post payment, respectively.
An auction is deterministic if $x_i(\boldsymbol{v}) \in \{0, 1\}$ for any $i$ and any $\boldsymbol{v}$.
With a slight abuse of notation, we use $x_i(v_i)$, $p_i(v_i)$ to denote interim allocation and interim payment, respectively.
The intended meaning will be clear from context.
$x_i(v_i)$ and $p_i(v_i)$ are defined as expectations over $\boldsymbol{v}_{-i}$.
Specifically, the interim allocation can be expressed as $x_i(v_i) = \int_{\boldsymbol{v}_{-i}} x_i(v_i, \boldsymbol{v}_{-i}) f_{-i}(\boldsymbol{v}_{-i}) \mathrm{d} \boldsymbol{v}_{-i}$ where $f_{-i}(\boldsymbol{v}_{-i}) = \prod_{j \neq i} f_j(v_j)$ and $\mathrm{d} \boldsymbol{v}_{-i} = \mathrm{d} v_1 \cdots \mathrm{d} v_{i - 1} \mathrm{d} v_{i+1} \cdots \mathrm{d} v_n$.
The interim payment can be expressed as $p_i(v_i) = \int_{\boldsymbol{v}_{-i}} p_i(v_i, \boldsymbol{v}_{-i}) f_{-i}(\boldsymbol{v}_{-i}) \mathrm{d} \boldsymbol{v}_{-i}$.

We consider the BIC, DSIC, and interim Individually Rational (IR) constraints in mechanism design.
A mechanism is BIC if truth-telling is the optimal strategy for each buyer, given that they have probabilistic beliefs about the valuations of other buyers.
A mechanism is DSIC if truth-telling always maximizes a buyer's utility, regardless of the actions of other buyers.
A mechanism is interim IR if each buyer receives a non-negative expected utility from participating.
Formally, these constraints are defined as follows:
\begin{itemize}
    \item BIC constraint: $v_i \cdot x_i(v_i) - p_i(v_i) \geq v_i \cdot x_i(b_i) - p_i(b_i)$, for all $i, v_i, b_i \in \mathbb{R}_{\geq 0}$;
    \item DSIC constraint: $v_i \cdot x_i(v_i, \boldsymbol{b}_{-i}) - p_i(v_i, \boldsymbol{b}_{-i}) \geq v_i \cdot x_i(b_i, \boldsymbol{b}_{-i}) -p_i(b_i, \boldsymbol{b}_{-i})$, for all $i$, $v_i$, $b_i \in \mathbb{R}_{\geq 0}$, and $\boldsymbol{b}_{-i} \in \mathbb{R}_{\geq 0}^{n - 1}$;
    \item Interim IR constraint: $v_i \cdot x_i(v_i) - p_i(v_i) \geq 0$, for all $i, v_i$.
\end{itemize}

According to \cite{myerson1981optimal}, a BIC auction is characterized by the following conditions: $x_i(v_i)$ is nondecreasing (monotone) in $v_i$ for all $i$, and the payment function satisfies the identity $p_i(v_i) = v_i \cdot x_i(v_i) - \int_0^{v_i} x_i(t) \mathrm{d} t + p_i(0)$.
Under interim IR, we normalize $p_i(0) = 0$.

According to \cite{nisan2007algorithmic}, a DSIC auction is characterized by the following conditions: $x_i(v_i, \boldsymbol{b}_{-i})$ is nondecreasing (monotone) in $v_i$ for all $i$ and any given $\boldsymbol{b}_{-i}$, and the payment function satisfies $v_i \cdot x_i(v_i, \boldsymbol{b}_{-i}) - \int_0^{v_i} x_i(t, \boldsymbol{b}_{-i}) \mathrm{d} t$.
Notably, any DSIC auction is also BIC.

The formulas for expected welfare and expected revenue are as follows:
\begin{align*}
    \textrm{WEL} &= \sum_{i=1}^n \int_{v_i} v_i x_i(v_i) f_i(v_i) \mathrm{d} v_i, \\
    \textrm{REV} &= \sum_{i=1}^n \int_{v_i} \phi_i(v_i) x_i(v_i) f_i(v_i) \mathrm{d} v_i,
\end{align*}
where $\phi_i(v_i) = v_i - \frac{1 - F_i(v_i)}{f_i(v_i)}$, known as the ``virtual value'' in \cite{myerson1981optimal}.
We also use the quantile space \cite{bulow1989simple}.
Let $q_i = F_i(v_i) \in [0, 1]$, and with a slight abuse of notation, let $v_i(q_i) = F_i^{-1}(q_i)$, the inverse function of $F_i(v_i)$.
Then, in the quantile space, we have 
\begin{align*}
    \textrm{WEL} &= \sum_{i=1}^n \int_{q_i} v_i(q_i) \hat{x}_i(q_i) \mathrm{d} q_i, \\
    \textrm{REV} &= \sum_{i=1}^n \int_{q_i} \hat{\phi}_i(q_i) \hat{x}_i(q_i) \mathrm{d} q_i,
\end{align*}
where $\hat{\phi}_i(q_i) = v_i(q_i) - (1 - q_i) v_i'(q_i)$, and $\hat{x}_i(q_i): [0, 1] \rightarrow [0, 1]$ is the interim allocation when $F_i(v_i) = q_i$ in the quantile space.

A point $P$ is a Pareto point of the range of the (revenue, welfare) pairs if there is no point $P' \neq P$ in the range that dominates $P$ (coordinate-wise).

\textbf{Implementability}: a (revenue, welfare) pair is implemented by mechanism $\mathcal{M}$ if and only if $\mathcal{M}$ yields revenue exactly and welfare exactly;
an interim allocation rule $\{x_i(\cdot)\}_{i \in [n]}$ is implemented by mechanism $\mathcal{M}$ if and only if $\mathcal{M}$ induces interim allocation exactly $\{x_i(\cdot)\}_{i \in [n]}$.

We present a variant of Border's theorem \cite{border1991implementation,che2013generalized} that, in our setting, characterizes exactly when an interim allocation is implemented by a randomized BIC auction.

\textbf{Border's theorem}: an interim allocation profile $(x_1(\cdot), \cdots, x_n(\cdot))$ can be implemented by a randomized BIC auction if and only if
\begin{align*}
    \forall (v_1, \cdots, v_n): \sum_i \int_{t_i = v_i}^{\infty} x_i(t_i) f_i(t_i) \mathrm{d} {t_i} \leq 1 - \prod_i F_i(v_i).
\end{align*}
In the i.i.d. symmetric case, in quantile space, it simplifies to
\begin{align*}
    \forall q \in [0, 1]: \int_{q}^1 \hat{x}(t) \mathrm{d} t \leq \frac{1 - q^n}{n},
\end{align*}
where $\hat{x}$ denotes the common interim allocation function.

The inequality compares two probabilities that must be consistent in any feasible single-item allocation.
Fix thresholds $(v_1, \cdots, v_n)$ and consider the event that every bidder $i$'s type exceeds $v_i$.
The left-hand side, $\sum_i \int_{t_i = v_i}^{\infty} x_i(t_i) f_i(t_i) \mathrm{d}t_i$, is the ex-ante probability that the mechanism allocates the item to some bidder from within the set of ``above-threshold'' types (counting bidder-by-bidder).
The right-hand side, $1 - \prod_i F_i(v_i)$, is the probability that at least one bidder is above her threshold.
Since only one item can be allocated, the probability of allocating the item to an above-threshold bidder can never exceed the probability that such a bidder exists.
Border's theorem states that this family of inequalities (for all thresholds) is not only necessary but also sufficient for interim implementability.
In our paper, this characterization lets us work directly with interim allocation rules in value/quantile space: we propose candidate interim allocations $\hat{x}$, verify feasibility by checking the corresponding Border constraints (e.g., the symmetric quantile form $\int_q^1 \hat{x}(t) \mathrm{d}t \le \frac{1 - q^n}{n}$), and then optimize linear objectives such as welfare and (virtual) revenue over this feasible set.
Finally, tightness of a Border constraint has a clean meaning: equality at $(v_1, \cdots, v_n)$ means there is no slack in the event ``someone is above threshold'', i.e., whenever at least one bidder exceeds her threshold, the mechanism allocates the item with probability 1 to an above-threshold bidder.
Strict inequality indicates that with positive probability the mechanism leaves allocation probability ``unused'' relative to that event (for example, it sometimes does not allocate to any above-threshold bidder even though one exists).
This interpretation is key for understanding the randomized benchmark region and for contrasting it with the additional structure imposed by deterministic DSIC implementability later in the paper.

\section{Implementation with Respect to (Revenue, Welfare) Pairs}\label{sec:pair}
\subsection{Nonequivalence of Deterministic BIC and Deterministic DSIC Auctions with Respect to Pareto Points in the Discrete Case}\label{sec:neq_pareto}
In Section \ref{sec:Preliminary}, values are drawn from continuous distributions.
Here we consider an i.i.d. discrete prior: each $v_i$ takes values in $V = \{ v^{1}, \cdots, v^{m} \}$ with probability vector $\boldsymbol{\pi} = (\pi_1, \cdots, \pi_m)$, where $\pi_k =  \mathrm{Pr}[v_i = v^{k}]$.
A deterministic mechanism $(x_i, p_i)$ specifies an ex post allocation $x_i(\mathbf{v}) \in \{0, 1\}$ and payment $p_i(\mathbf{v})$ for every profile $\mathbf{v} \in V^n$.
We evaluate it by expected welfare and expected revenue under the prior $\boldsymbol{\pi}$:
\begin{align*}
    \mathrm{WEL} &= \mathbb{E}_{\mathbf{v} \sim \boldsymbol{\pi}^{n}} \Big[\sum_{i \in [n]} v_i x_i(\mathbf{v}) \Big], \\
    \mathrm{REV} &= \mathbb{E}_{\mathbf{v} \sim \boldsymbol{\pi}^{n}} \Big[\sum_{i \in [n]} p_i(\mathbf{v}) \Big].
\end{align*}
Equivalently, $\mathbb{E}_{\mathbf{v} \sim \boldsymbol{\pi}^{n}}[\cdot]$ denotes a finite sum over all $\mathbf{v} = (v_1, \cdots, v_n) \in V^n$, where each coordinate $v_i = v^{k_i}$ is weighted by $\pi_{k_i}$.
Hence the profile $\mathbf{v}$ has weight $\prod_{i \in [n]} \pi_{k_i}$.

We give an example in the i.i.d. discrete case to derive the following theorem.
\begin{theorem}\label{Thm:Nonequivalence-discrete}
    There exists a Pareto point $P$ that is implemented by a deterministic BIC auction but cannot be implemented by any deterministic DSIC auction in the i.i.d. discrete case.
\end{theorem}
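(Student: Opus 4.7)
The plan is to produce a minimal concrete instance and exhibit an explicit deterministic BIC mechanism whose (revenue, welfare) pair is Pareto in the attainable range but is not attained by any deterministic DSIC mechanism. The argument is computational, searching within a finite space, with the key leverage being that BIC constrains only the interim allocation to be monotone, while DSIC constrains the ex-post allocation to be monotone in each coordinate separately. The gap is most visible when two deterministic ex-post tables can be permuted so that they share the same interim profile (hence both BIC) while only one remains ex-post coordinate-monotone (DSIC).

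First, I would fix $n = 2$ and a small $m$ (e.g.\ $m = 3$), which I expect to be the smallest regime in which such a permutation argument works. I would enumerate all deterministic allocations $x : V^n \to \{0,1\}^n$ satisfying the feasibility constraint $\sum_i x_i(\mathbf{v}) \le 1$, and classify them by their induced interim profiles $(x_1(v), x_2(v))_{v \in V}$. The target candidates are interim profiles that are componentwise monotone (hence BIC-implementable) but whose only deterministic ex-post realizations violate monotonicity of $x_i(\cdot, b_{-i})$ in $v_i$ for some $i$ and some $b_{-i}$. For one such candidate I would fix an explicit realization, apply Myerson's identity (which pins down payments given the monotone interim and the IR normalization $p_i(0) = 0$), and record the resulting pair $(R^\star, W^\star)$.

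Next, I would verify Pareto-optimality of $(R^\star, W^\star)$ in the deterministic BIC range by an exhaustive sweep over all feasible monotone-interim deterministic allocations, confirming no pair weakly dominates $(R^\star, W^\star)$ with a strict inequality. To rule out deterministic DSIC implementation, I would enumerate all deterministic DSIC allocations, using the fact that in the single-item discrete setting each $x_i(\cdot, \boldsymbol{b}_{-i})$ is a $0/1$ threshold rule on $V$, so the DSIC class is parametrized by a threshold in $V \cup \{+\infty\}$ per $(i,\boldsymbol{b}_{-i})$ pair subject to feasibility. For each such allocation the DSIC payments are determined by the critical-value formula under IR, so $(R, W)$ is determined, and I verify that none of the enumerated DSIC pairs equals $(R^\star, W^\star)$.

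The main obstacle is that the separation does not seem to admit a clean algebraic obstruction: DSIC is a pointwise constraint on the ex-post rule, and whether a given interim profile admits a deterministic DSIC realization depends on a subtle combinatorial coupling between the rows and columns of the allocation table. Consequently, the proof must rest on the exhaustive enumeration described above. The write-up will display the minimal instance $(n, m, \boldsymbol{\pi})$, the explicit deterministic BIC allocation table together with its interim and payment vectors, and either a short case analysis or a record of the computer-assisted enumeration showing that every feasible ex-post coordinate-monotone table either induces a different interim profile or yields a $(R, W)$ pair distinct from $(R^\star, W^\star)$.
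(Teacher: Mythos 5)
Your approach is essentially the same as the paper's: pick a small i.i.d.\ discrete instance with $n=2$ buyers and $m=3$ types, exhaustively enumerate all deterministic ex-post allocations, identify a Pareto point in the $(\mathrm{REV},\mathrm{WEL})$ region of deterministic BIC mechanisms, and then verify (again by enumeration, using the threshold structure of deterministic DSIC rules) that no deterministic DSIC mechanism attains it. You also correctly identify the underlying source of the separation, namely that BIC constrains only the interim allocation to be monotone while DSIC additionally constrains the ex-post allocation to be coordinate-wise monotone, so an allocation table can be interim-monotone while failing ex-post monotonicity. The paper instantiates this with types $\{1,10,100\}$ and probabilities $(0.05, 0.15, 0.8)$, finds the Pareto point $(\mathrm{WEL},\mathrm{REV})=(96.2275,85)$, and checks that the only four allocation tables realizing it all violate ex-post monotonicity.

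One point you gloss over, which the paper handles explicitly: in a discrete type space, Myerson's identity with the normalization $p_i(0)=0$ does \emph{not} pin down the BIC payments given a monotone interim allocation. The local IC constraints only bound each payment increment within an interval, $v^{k-1}\bigl(x(v^k)-x(v^{k-1})\bigr)\le p(v^k)-p(v^{k-1})\le v^{k}\bigl(x(v^k)-x(v^{k-1})\bigr)$, and IR at the lowest type only gives an upper bound $p(v^1)\le v^1 x(v^1)$. The paper therefore solves a small LP to pick the revenue-maximizing BIC payments for each candidate allocation; choosing any other payment in the feasible interval would underreport revenue and could misidentify the Pareto frontier. The same remark applies to your DSIC side: ``the critical-value formula'' in a discrete setting is an interval $(v^{k-1}, v^k]$ of feasible winner payments, and one should take the revenue-maximizing choice $p=v^k$. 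If you replace ``pins down'' with ``maximize revenue over the IC+IR-feasible payment set given the allocation'' throughout, your plan matches the paper's argument.
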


Suppose that there are two identical buyers, each with three types: $t_1 = 100, t_2 = 10, t_3 = 1$, occurring with probabilities $p_1 = 0.8, p_2 = 0.15, p_3 = 0.05$ respectively.
For each type of buyer 1 and buyer 2, the item can be allocated to no one, to buyer 1, or to buyer 2.
Thus, we have to check $3^9 = 19683$ deterministic allocation profiles.
Similar to \cite{diakonikolas2012efficiency}, to consider Pareto-optimal points, we can assume without loss of generality that payments are specified on the finite type space.
Therefore, we can enumerate all deterministic allocation rules on the finite type space and, for each, compute supporting payments and check BIC/DSIC/IR \footnote{Otherwise, there would be infinitely many feasible actions.}.
Through experiments, we find that $(\textrm{WEL} = 96.2275, \textrm{REV} = 85)$ is a Pareto-optimal point among all outcomes implementable by deterministic BIC mechanisms (see Figure \ref{fig:counter1}).
Only partial figure is provided here since the complete figure is too large.
\begin{figure}[htbp]
    \centering
    \includegraphics[width=0.45\textwidth]{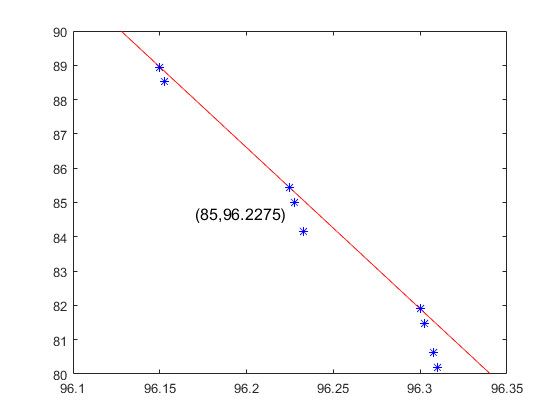}
    \caption{Pareto points in the area ($\textrm{WEL}>96$ and $\textrm{REV}>80$). Each blue point represents a Pareto point. The red line characterizes the boundary. The horizontal axis represents welfare, and the vertical axis represents revenue.}
    \label{fig:counter1}
\end{figure}
The Pareto-optimal point $(\textrm{WEL} = 96.2275, \textrm{REV} = 85)$ can be achieved only by four deterministic BIC auctions, with $i \in \{1, 2\}$ and $j \in \{1, 2\}$, as shown in Table \ref{Tab:4-DIC auctions}.
However, none of these auctions is DSIC.

\begin{table}[htbp]
    \caption{Four deterministic BIC auctions implement the Pareto-optimal point $(\textrm{WEL} = 96.2275, \textrm{REV} = 85)$. Here the identification $v_1(v_2)$ represents the type of buyer 1(2). The elements represent the buyer who gets the item, where ``seller'' means that no buyer gets the item.}
    \centering
    \label{Tab:4-DIC auctions}
    \begin{tabular} {|c|c|c|c|} \hline
        \diagbox{$v_2$}{$v_1$}	& $t_3 = 1$ & $t_2 = 10$ & $t_1 = 100$ \\
        \hline
        $t_3 = 1$ & buyer $j$ & seller & buyer 1 \\
        \hline
        $t_2 = 10$ & seller & buyer $j$ & buyer 1 \\
        \hline
        $t_1 = 100$ & buyer 2 & buyer 2 & buyer $i$ \\
        \hline
    \end{tabular}
\end{table}
Note that in Table \ref{Tab:4-DIC auctions}, buyer $j$ and buyer $i$ denote any buyer.
Therefore, the four deterministic auctions are: (1) $i = 1, j = 1$; (2) $i = 1, j = 2$; (3) $i = 2, j = 1$; (4) $i = 2, j = 2$.
Here, we use the case $i = 2, j = 1$ to illustrate that this deterministic auction is BIC but not DSIC.
For buyer 1, we can get the interim allocation, that is $x_1(1) = 0.05$, $x_1(10) = 0.15$ and $x_1(100) = 0.2$.
For buyer 2, we can get the interim allocation, that is $x_2(1) = 0$, $x_2(10) = 0$ and $x_2(100) = 1$.
Clearly, the allocation functions of both buyers are monotonic.
The welfare generated by buyer 1 is $0.05 \times 1 \times 0.05 + 0.15 \times 10 \times 0.15 + 0.2 \times 100 \times 0.8 = 16.2275$; the welfare generated by buyer 2 is $0 \times 1 \times 0.05 + 0 \times 10 \times 0.15 + 1 \times 100 \times 0.8 = 80$; the total welfare is $16.2275 + 80 = 96.2275$.
As for revenue, we have to design payment rules for both buyers.
However, unlike the case with continuous types, when types are discrete, there can be multiple possible payment rules that satisfy BIC.
Based on the established allocation rule, we can formulate an optimization problem that maximizes the seller's expected revenue while incorporating the BIC constraints as part of the feasible set.
By solving this constrained optimization problem, we can derive the corresponding payment functions for both buyers.
Finally, we can get the interim payment, that is $p_1(1) = 0.05$, $p_1(10) = 1.05$, $p_1(100) = 6.05$, $p_2(1) = 0$, $p_2(10) = 0$, $p_2(100) = 100$.
The revenue generated by buyer 1 is $0.05 \times 0.05 + 1.05 \times 0.15 + 6.05 \times 0.8 = 5$; the revenue generated by buyer 2 is $0 \times 0.05 + 0 \times 0.15 + 100 \times 0.8 = 80$; the total revenue is $5 + 80 = 85$.

Although the interim allocation rules are monotone (hence the mechanism is BIC), the ex-post allocation rule is not monotone in a bidder's own report for fixed opponent reports, so it cannot be DSIC.
For instance, fixing buyer 2's report at $v_2 = 1$, buyer 1 wins when $v_1 = 1$, loses when $v_1 = 10$, and wins again when $v_1 = 100$, which violates the monotonicity requirement for deterministic DSIC implementability.
Hence, none of these four auctions can satisfy the DSIC constraints, even though they all achieve the same Pareto-optimal (revenue, welfare) pair under BIC.

\subsection{Equivalence of Deterministic DSIC and Randomized BIC Auctions with Respect to Pareto Points for Regular Distribution}\label{sec:eq_pareto}
In this subsection, we assume that the buyers' prior distributions are continuous and atomless, but not necessarily identical.
\begin{theorem}\label{Thm:Pareto_regular}
    If each buyer's prior distribution is regular, i.e., the virtual value $\phi(v) = v - \frac{1 - F(v)}{f(v)}$ is increasing, then any Pareto point implementable by a randomized BIC auction is also implementable by a deterministic DSIC auction.
\end{theorem}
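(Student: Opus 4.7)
The plan is to express every Pareto point as the maximizer of a non-negative linear combination of revenue and welfare, and then exhibit a deterministic DSIC mechanism that attains this maximum point-wise. First, I would argue that the set $\mathcal{S}$ of (REV, WEL) pairs achievable by randomized BIC, interim-IR auctions is convex and compact: convex combinations of BIC auctions are BIC (the interim constraints are linear in the mechanism) and the revenue/welfare functionals are linear, while the normalization $p_i(0) = 0$ is without loss on Pareto points (any negative $p_i(0)$ can be raised to $0$, strictly increasing revenue at fixed welfare and contradicting Pareto optimality). Hence, by a supporting-hyperplane argument, any Pareto point $P^{*} = (R^{*}, W^{*})$ is a maximizer of $\lambda R + (1-\lambda) W$ over $\mathcal{S}$ for some $\lambda \in [0, 1]$.

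Next, I would invoke Myerson's envelope identity (with $p_i(0) = 0$) to rewrite the linear objective as
\[
\lambda R + (1-\lambda) W = \sum_{i=1}^{n} \int \bigl[\lambda \phi_i(v_i) + (1-\lambda) v_i\bigr] \, x_i(\boldsymbol{v}) \prod_{j} f_j(v_j) \, \mathrm{d} \boldsymbol{v},
\]
and then optimize point-wise over allocations: assign the item to a bidder $i^{*}(\boldsymbol{v}) \in \argmax_{j} \bigl[\lambda \phi_j(v_j) + (1-\lambda) v_j\bigr]$ whenever the argmax value is non-negative, and to no one otherwise, with an arbitrary fixed tie-break. Regularity makes each $\phi_j$ non-decreasing, so $\lambda \phi_j(v_j) + (1-\lambda) v_j$ is non-decreasing in $v_j$; therefore, for any fixed $\boldsymbol{v}_{-i}$, bidder $i$'s winning set is upward closed in $v_i$, giving ex-post monotonicity in the bidder's own report. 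Pairing this allocation with the Myerson DSIC payment yields a deterministic, DSIC, IR mechanism whose value of $\lambda R + (1-\lambda) W$ equals the global maximum over all randomized BIC mechanisms.

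It remains to show that this deterministic mechanism lands on $P^{*}$ exactly rather than some other point of the supporting face. Because the priors are continuous and atomless and, for $\lambda < 1$, the function $\lambda \phi_i(v_i) + (1-\lambda) v_i$ is strictly increasing in $v_i$, the set on which two bidders tie in this virtual quantity has Lebesgue measure zero; the point-wise-optimal allocation is therefore essentially unique, and every BIC optimizer of the linear objective agrees with it almost everywhere, hence induces the same (REV, WEL). The supporting face at $\lambda \in [0, 1)$ thus collapses to a single point, which must be $P^{*}$. For the corner $\lambda = 1$ (Myerson revenue) strict regularity gives the same uniqueness, while under merely non-decreasing $\phi_j$ one resolves ties by highest $v_j$ among bidders attaining $\max_{j'} \phi_{j'}(v_{j'})$, selecting the welfare-maximal element of the revenue-optimal face while retaining determinism and DSIC. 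The chief obstacle is precisely this coincidence step: ruling out a multi-valued supporting face is what forces the essential use of atomlessness and regularity, since without them the deterministic point-wise mechanism could slip to a different point on the same face than $P^{*}$.
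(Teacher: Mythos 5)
Your proof is correct, but it follows a genuinely different route from the paper's. The paper argues by \emph{transforming} an arbitrary Pareto-optimal randomized BIC mechanism: it iteratively replaces each bidder's interim allocation with the derivative of a concave envelope $\underline{h}_i$ built from the Border constraints, shows this preserves Border feasibility while weakly improving both revenue and virtual surplus, and then constructs a deterministic DSIC implementation of the resulting reduced form via a hierarchical priority rule derived from tight Border profiles. Your approach instead works directly in outcome space: convexity of the achievable $(\mathrm{REV},\mathrm{WEL})$ set plus the Pareto property yields a supporting functional $\lambda R + (1-\lambda)W$, Myerson's lemma converts it into pointwise virtual-quantity maximization, regularity makes the pointwise rule ex-post monotone (hence deterministic DSIC), and atomlessness gives a.e.\ uniqueness of any optimizer's allocation, so the supporting face (after the WLOG $p_i(0)=0$) is a singleton and the deterministic rule must land on $P^*$. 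Your route is shorter and avoids Border's theorem entirely, at the cost of having to supply two standard-but-nontrivial ingredients the paper sidesteps: compactness of the outcome set (so the supporting-hyperplane argument applies) and the observation that for $\lambda>0$ the optimizer necessarily has $p_i(0)=0$, which is what pins revenue once the allocation is fixed. One small imprecision worth tightening: the claim that ``every BIC optimizer of the linear objective \dots{} induces the same (REV, WEL)'' is only true after normalizing $p_i(0)=0$; for $\lambda=0$ the face of the \emph{unnormalized} outcome set is a vertical segment, not a point, and it is Pareto-optimality (not uniqueness of the optimizer) that selects the top of that segment. With that caveat, and noting that the paper's ``increasing'' virtual value is strict (so the $\lambda=1$ tie-break discussion is unnecessary under the stated hypothesis), the argument is sound.
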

We prove the theorem in two steps.
In the first step, starting from a randomized BIC mechanism, we transform its allocation rule by modifying only one buyer's interim allocation at a time, while keeping the other buyers' interim allocations fixed.
We repeat this procedure for all $n$ buyers.
Under the regularity condition, each transformation weakly improves both expected welfare and expected revenue, so the final mechanism weakly dominates the original one in the $(\mathrm{REV}, \mathrm{WEL})$ space.
In the second step, we show that the last (transformed) allocation rule admits a deterministic DSIC implementation.
Since the original point is already Pareto-optimal among outcomes achievable by randomized BIC mechanisms, the weak improvement must be tight.
Therefore, the deterministic DSIC mechanism implements the same Pareto point.
The full proof of Theorem \ref{Thm:Pareto_regular} is deferred to Appendix \ref{Proof-Thm:Pareto_regular}.

\subsection{Equivalence of Deterministic DSIC and Randomized BIC Auctions with Respect to All (Revenue, Welfare) Pairs for i.i.d Distribution}\label{sec:eqpair}
In this subsection, we focus on the i.i.d case with a continuous and atomless distribution $F$ and utilize the quantile space.
Since buyers' private valuations are independently drawn from a common distribution $F$, we omit the subscripts on $v_i$ and $q_i$ and write them simply as $v$ and $q$.
We provide sufficient conditions under which the range of (revenue, welfare) pairs achievable by deterministic auctions is equal to that of randomized BIC auctions.
\begin{theorem}\label{thm:equal_to_random}
    If the prior distribution $F$ is third-order differentiable and satisfies the following condition, then any (revenue, welfare) pair that is implementable by a randomized auction is also implementable by a deterministic DSIC auction.
    \begin{equation}\label{thrid_order}
        v''(q) \geq 0, \quad \hat{\phi}''(q) \leq 0, \quad \forall q.
    \end{equation}
\end{theorem}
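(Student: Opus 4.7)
The plan is, given $(R^\star, W^\star)$ realized by some randomized BIC auction, to construct a deterministic DSIC mechanism attaining the same pair. I would work entirely in quantile space, using Border's theorem as the feasibility certificate on the randomized side. First I would symmetrize the given randomized BIC mechanism by averaging over all bidder permutations: by the i.i.d.\ assumption this preserves $(R^\star, W^\star)$ and yields a symmetric mechanism with a common monotone interim allocation $\hat x(q)$ satisfying $\int_q^1 \hat x(t)\,dt \le (1-q^n)/n$, so that $R^\star = n\int_0^1 \hat\phi(q)\hat x(q)\,dq$ and $W^\star = n\int_0^1 v(q)\hat x(q)\,dq$.

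Next I would introduce a flexible family of deterministic DSIC auctions. The basic building block is the ``conditional second-price with reserve'' mechanism $\mathcal{M}_{\tau,\beta}$ with $0 \le \beta \le \tau \le 1$: allocate the item to the highest-quantile bidder iff every bidder's quantile is at least $\beta$ and the highest is at least $\tau$, with payments given by Myerson's identity. Fixing the opponents' reports, bidder $i$'s winning rule reduces to $q_i \ge \max(\tau, \max_{j\ne i} q_j)$, which is monotone in $q_i$, so $\mathcal{M}_{\tau,\beta}$ is indeed deterministic DSIC; a direct computation gives the symmetric interim allocation $\hat x_{\tau,\beta}(q) = (q-\beta)^{n-1}\mathbf{1}[q\ge\tau]$. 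Because this two-parameter symmetric subfamily may not reach every $(R, W)$ in the randomized region, I would also allow asymmetric variants in which each bidder faces her own pair $(\tau_i, \beta_i)$; these remain deterministic DSIC by the same monotonicity argument and enlarge the $(R, W)$ image.

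The core step is to use condition~(\ref{thrid_order}) to show that the extended family's $(R, W)$ image coincides with the randomized BIC region. I would characterize the boundary of the randomized region via Lagrangian relaxations of $\max \int (\alpha_R \hat\phi(q) + \alpha_W v(q))\hat x(q)\,dq$ subject to monotonicity and Border's inequality, and then use integration by parts together with $v''(q)\ge 0$ and $\hat\phi''(q)\le 0$ to force the optimizer to take the shifted-power form $(q-\beta)^{n-1}\mathbf{1}[q\ge\tau]$ (possibly per bidder), hence to be implementable by some $\mathcal{M}_{\tau,\beta}$ or one of its asymmetric variants. A continuity/connectivity argument then propagates the equality from the boundary to the interior of the $(R, W)$ region.

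The main obstacle is this last step: translating the second-derivative conditions on $v$ and $\hat\phi$ into the structural statement that extremal randomized allocations have support on a single interval $[\tau,1]$ with shifted intensity $(q-\beta)^{n-1}$. Without~(\ref{thrid_order}), extremal allocations of the Border polytope can exhibit multi-interval on/off patterns that no fixed-parameter deterministic family can realize; the convexity/concavity hypotheses are precisely what rule out such patterns and let the deterministic DSIC family cover the entire randomized $(R, W)$ region.
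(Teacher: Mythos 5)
Your high-level plan matches the paper's: work in quantile space, symmetrize, characterize extremal Border-feasible allocations under condition~(\ref{thrid_order}), implement the extremal (boundary) points via a parametric deterministic DSIC family, and fill the interior by a continuity argument. However, there is a genuine gap in the central structural claim.

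The gap is in the asserted form of extremal allocations and the deterministic family built around it. You claim the optimizer takes the shifted-power form $\hat{x}(q) = (q-\beta)^{n-1}\mathbf{1}[q\ge\tau]$, which is exactly the interim allocation of your conditional second-price-with-lower-bound mechanism $\mathcal{M}_{\tau,\beta}$. But this is not what the optimization forces. The function $P(q)=\int_q^1 \hat{x}(t)\,\mathrm{d}t$ must be concave, nonincreasing, with $P(q)\le(1-q^n)/n$; the ``minimum/maximum stochastic order'' argument (integration by parts plus $v''\ge 0$, $\hat\phi''\le 0$, single-crossing of concave functions) shows the extremizer is a \emph{piecewise} function of $P$: constant on a low segment, linear on a middle segment, and equal to the Border boundary $(1-q^n)/n$ on a high segment (or the mirrored variant). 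Translating to $\hat{x}=-P'$, the middle segment of the interim allocation is \emph{constant} $k$, and the high segment is $q^{n-1}$ (not $(q-\beta)^{n-1}$). No choice of $(\tau,\beta)$ produces a flat plateau followed by a power-law in $\hat{x}_{\tau,\beta}$; the symmetric shifted-power family therefore cannot realize the revenue-maximizing (or -minimizing) boundary at a given welfare level when $n\ge 2$. Your fallback to ``asymmetric variants $(\tau_i,\beta_i)$'' is only a hope; you neither show it enlarges the image enough nor reconcile it with the actual extremal structure.

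A second, subtler mismatch: you aim to match each bidder's interim allocation with the deterministic mechanism's, but the paper's implementation step is weaker and easier. It only needs the \emph{sum} $\sum_i \hat{x}_i(q)$ to coincide (Lemma~\ref{Lem:sum_equal} shows this already pins down revenue and welfare), and it constructs a deterministic DSIC mechanism achieving that sum via a hierarchical, bidder-asymmetric priority rule with an auxiliary tie-breaking parameter $\lambda$, not via a two-threshold reserve family. This relaxation is what makes the implementation tractable; insisting on matching each $\hat{x}_i$ exactly overconstrains the problem. To repair your proposal you would need (i) to derive the correct piecewise (constant/plateau, then power-law, or mirrored) form of extremizers from~(\ref{thrid_order}), and (ii) to replace $\mathcal{M}_{\tau,\beta}$ with a construction that reproduces the \emph{aggregate} interim allocation of such piecewise auctions, as the paper does.
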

It is not hard to verify that all uniform distributions and power-law distributions $F(v) = 1 - v^{-\alpha}$ with $\alpha \leq 1$, including the equal-revenue distributions, satisfy the condition.

The proof sketch is as follows: we first characterize the boundary of randomized BIC auctions, then we prove that each (revenue, welfare) pair on the boundary can be implemented by a deterministic auction (see Lemmas \ref{lem:solutions_of_o2} - \ref{Lem:sum_equal}).
Finally we prove that each (revenue, welfare) pair strictly between the revenue-maximizing boundary and the revenue-minimizing boundary at the same welfare level can be implemented by a deterministic auction (see Lemma \ref{lem:inner-pairs}).
The full proof of Theorem \ref{thm:equal_to_random} is deferred to  Appendix \ref{Proof:equal_to_random}.

We begin by analyzing the boundary of (revenue, welfare) pairs for randomized BIC auctions.
For convenience, we assume that $F$ has full support on $[0, 1]$.
By \cite{maskin1984optimal}, it is without loss of generality to only consider symmetric auctions.
Define $P(q) = \int_q^1 \hat{x}(t) \mathrm{d} t$ and we have
\begin{align*}
    \textrm{WEL} &= n \int_0^1 v(q) \hat{x}(q) \mathrm{d} q = -n \int_0^1 v(q) \mathrm{d} P(q) \\
    &= n \int_0^1 v'(q) P(q) \mathrm{d} q, \\
    \textrm{REV} &= n \int_0^1 \hat{\phi}(q) \hat{x}(q) \mathrm{d} q= -n \int_0^1 \hat{\phi}(q) \mathrm{d} P(q) \\
    &= n \hat{\phi}(0) P(0) + n \int_0^1 \hat{\phi}'(q) P(q) \mathrm{d} q.
\end{align*}

Note that $P'(q) = -\hat{x}(q) \leq 0$ and $P''(q) = -\hat{x}'(q) \leq 0$ for a randomized BIC auction.
By the feasibility from Border's theorem, $P(q)$ is bounded by $P(q) \leq \frac{1 - q^n}{n}$.
We fix $\textrm{WEL} = c$ where $c$ is a constant, and then compute the range of $\textrm{REV}$.
Thus, the optimization problem is
\begin{align}\label{Objective2}
    \max / \min ~~~~ & n \hat{\phi}(0) P(0) + n \int_0^1 \hat{\phi}'(q) P(q) \mathrm{d} q \\ \nonumber
    \text{s.t.} ~~~~ & n \int_0^1 v'(q) P(q) \mathrm{d} q = c, ~ P(q) \leq \frac{1 - q^n}{n}, \\ \nonumber
    ~~~~ & P'(q) \leq 0, ~ P''(q) \leq 0, ~ P(1) = 0. \nonumber
\end{align}

We define a class of auctions called ``piecewise auctions''.
We show that any point on the boundary of the feasible range can be implemented by a piecewise auction.

\begin{definition}[Piecewise auctions]\label{def:piecewise auctions}
    A (symmetric, randomized) BIC auction is a piecewise auction if it satisfies one of the following two conditions:\\
    1. There exists $0 \leq r_1 < r_2 \leq 1$ such that
	\begin{itemize}
		  \item $P'(q) = 0$, for $0 < q \le r_1$,
		  \item $P'(q) = -k$, for $r_1 < q \le r_2$, where $k$ is a constant and $0 \leq k \leq \frac{r_2^n - r_1^n}{n(r_2 - r_1)}$,
		  \item $P(q) = \frac{1 - q^n}{n}$, for $r_2 < q \leq 1$.
	\end{itemize}
    2. There exists $0 \leq r_1 < r_2 \leq 1$ such that
	\begin{itemize}
		  \item $P'(q) = 0$, for $0 < q \le r_1$,
		  \item $P(q) = \frac{1 - q^n}{n}$, for $r_1 < q \leq r_2$,
		  \item $P'(q) = \frac{1 - r_2^n}{n(r_2 - 1)}$, for $r_2 < q \leq 1$.
	\end{itemize}
\end{definition}

According to the piecewise auctions we defined, we introduce the following Lemmas.
Lemma \ref{lem:solutions_of_o2} shows that any maximum and any minimum solutions of optimization problem \eqref{Objective2} must be a piecewise auction, so all boundary points of the $(\mathrm{REV}, \mathrm{WEL})$ region induced by randomized BIC auctions are attained by piecewise auctions.
Lemma \ref{lem:implementation_of_simple_auctions} then constructs, for each piecewise auction, a deterministic DSIC mechanism whose total interim allocation at every valuation coincides with that of the piecewise auction; together with Lemma \ref{Lem:sum_equal}, which states that identical total interim allocations imply identical revenue and welfare, this implies that every boundary point can be implemented by a deterministic DSIC auction.

\begin{lemma}\label{lem:solutions_of_o2}
    If the condition (\ref{thrid_order}) is satisfied, then any maximum and any minimum solutions of optimization problem (\ref{Objective2}) are a piecewise auction.
\end{lemma}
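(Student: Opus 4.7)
The plan is to treat problem (\ref{Objective2}) as an infinite-dimensional linear program: both the revenue objective and the welfare equality are linear in $P$, while the feasible set $\mathcal{F} = \{P : P \le B,\ P' \le 0,\ P'' \le 0,\ P(1) = 0\}$ with $B(q) = (1-q^n)/n$ is convex. Any optimum is therefore an extreme point of $\mathcal{F} \cap \{W(P) = c\}$, and I would invoke Lagrangian/KKT conditions with a scalar multiplier $\lambda$ for the welfare equality and non-negative multipliers $\mu_1(q),\mu_2(q),\mu_3(q)$ for the pointwise constraints $P \le B$, $P' \le 0$ and $P'' \le 0$, respectively, together with complementary slackness.

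First I would compute the first variation of the Lagrangian, integrate by parts, and pick up the boundary contributions at $q = 0, 1$ coming from the endpoint term $n\hat{\phi}(0)P(0)$ and the constraint $P(1) = 0$, to obtain the Euler--Lagrange identity
\[n\big(\hat{\phi}'(q) - \lambda v'(q)\big) = \mu_1(q) - \mu_2'(q) + \mu_3''(q),\qquad q \in (0,1).\]
Partition $[0,1]$ into the Border region $S_B = \{P^* = B\}$, the flat region $S_F = \{P^{*\prime} = 0\}$, the linear region $S_L = \{P^{*\prime\prime} = 0\}\setminus(S_B\cup S_F)$, and a residual slack set $S_O$ on which all three inequalities are strict. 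On $S_O$ complementary slackness forces $\mu_1 = \mu_2 = \mu_3 \equiv 0$, so the identity collapses to $\hat{\phi}'(q) = \lambda v'(q)$; differentiating gives $\hat{\phi}''(q) = \lambda v''(q)$, and condition (\ref{thrid_order}) ($v'' \ge 0$, $\hat{\phi}'' \le 0$) pins down $\lambda \le 0$ and eliminates $S_O$ except on a measure-zero set, where perturbing $P^*$ does not affect either revenue or welfare. Hence at (almost) every $q$ at least one of the three constraints is tight.

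The third step is to characterize how the pieces $S_B$, $S_F$, $S_L$ are arranged; this is the main obstacle. Because $\hat{x}^* = -P^{*\prime}$ must be non-decreasing and vanishes exactly on $S_F$, every flat piece must precede every non-flat piece, so $S_F$ is a single initial interval $(0, r_1]$. On $S_B$ one has $\hat{x}^*(q) = q^{n-1}$, strictly increasing, while on $S_L$ the allocation $\hat{x}^*$ is a positive constant. Non-decreasingness of $\hat{x}^*$, continuity of $P^*$, and the upper bound $P^* \le B$ then force exactly two admissible arrangements for the remainder of $[0,1]$: a single linear piece on $(r_1, r_2]$ with constant slope $-k$ followed by the Border on $(r_2, 1]$ (Case 1 of Definition \ref{def:piecewise auctions}, with $k \in [0, (r_2^n - r_1^n)/(n(r_2 - r_1))]$ required to keep $P^* \le B$), or the Border on $(r_1, r_2]$ followed by a terminal linear piece on $(r_2, 1]$ whose slope is determined by continuity of $P^*$ at $r_2$ (Case 2). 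To rule out more complex alternations---e.g.\ two disjoint linear pieces or a linear piece sandwiched between two Border pieces---I would invoke $\hat{\phi}'' \le 0$: this pins the sign of $\mu_3''$ across any $S_L$-endpoint and, together with the sign constraints $\mu_2,\mu_3 \ge 0$ at each switch point, precludes the multiplier configuration that an additional alternation would demand.

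Finally, the free parameters $(r_1, r_2, k)$ in Case 1 or $(r_1, r_2)$ in Case 2 can always be calibrated so that the single linear equality $W(P^*) = c$ is met, and a routine feasibility check confirms the resulting $P^*$ lies in $\mathcal{F}$. The minimization branch of (\ref{Objective2}) is handled verbatim with $\lambda$ of the opposite sign, yielding the analogous conclusion. The crux of the argument is the structural ordering step: it is there that condition (\ref{thrid_order}) is used in an essential way, both to rule out non-trivial slack intervals and to lock the sole admissible orderings of pieces into the two forms stated in Definition \ref{def:piecewise auctions}.
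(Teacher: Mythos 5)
Your approach (infinite-dimensional KKT with multipliers for the pointwise constraints plus complementary slackness) is genuinely different from the paper's, which proceeds by a primal exchange argument: the paper compares a candidate optimum $P$ with an alternative $\hat P$ at the same welfare level (and with $P(0)$ fixed), uses concavity to argue the two cross at a single point $q_0$, and then applies $v''\ge 0$ and $\hat\phi''\le 0$ via two integral inequalities to show $P$ weakly dominates $\hat P$ in revenue, i.e.\ the maximizer has minimum stochastic order (and symmetrically for the minimizer). Your duality framing is a legitimate alternative and would, if completed, more directly address the ``\emph{any} maximum/minimum solution'' wording.

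However, there is a genuine gap at exactly the place you flag as ``the main obstacle.'' You assert that non-decreasingness of $\hat x^* = -P^{*\prime}$, continuity of $P^*$, and $P^*\le B$ already force the arrangement of the tight sets into the two forms of Definition~\ref{def:piecewise auctions}. That is false at the level of feasibility: the configuration flat--linear--Border--linear (i.e.\ $P'=0$ on $[0,r_1]$, then $P'=-k$ on $(r_1,r_2]$, then $P=B$ on $(r_2,r_3]$, then a terminal linear segment on $(r_3,1]$) is concave, has non-decreasing $\hat x^*$, stays weakly below $B=(1-q^n)/n$ for $k$ small enough, and hits $P(1)=0$, yet it matches neither Case~1 nor Case~2. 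Ruling it out requires a genuine optimality argument, not a feasibility one. Your proposed resolution — that $\hat\phi''\le 0$ ``pins the sign of $\mu_3''$'' and, with $\mu_2,\mu_3\ge 0$, precludes the required multiplier configuration at an extra switch point — is exactly the step that needs to be written out, and it is not clear it goes through as stated: $\mu_3$ must satisfy $\mu_3\ge 0$, $\mu_3(0)=\mu_3(1)=0$ (from the boundary terms you picked up in the integration by parts), and complementary slackness $\mu_3 \cdot P^{*\prime\prime}=0$ wherever $P^{*\prime\prime}<0$, which in the four-piece configuration gives two disjoint intervals on which $\mu_3$ is forced to vanish; whether the resulting sign constraints on $\mu_3''$ from the Euler--Lagrange identity produce a contradiction depends on $\hat\phi''-\lambda v''$ in a way you have not verified. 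Until that calculation is carried out, the proposal does not prove the lemma. As a secondary point, the $S_O$ elimination is also not airtight: if $\lambda=0$ and $\hat\phi'\equiv 0$ on an interval (which condition~(\ref{thrid_order}) does not forbid), $S_O$ can have positive measure, and ``perturbations don't change revenue or welfare'' establishes non-uniqueness rather than that every optimizer is piecewise, which is what the lemma asserts.
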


\begin{lemma}\label{Lem:sum_equal}
    For any common prior distribution, if two auctions with interim allocation profile $(x_1(v), x_2(v), \cdots, x_n(v))$ and $(y_1(v), y_2(v), \cdots, y_n(v))$ respectively satisfy for any $v$,
    \begin{align*}
        \sum_{i = 1}^n x_i(v) = \sum_{i = 1}^n y_i(v),
    \end{align*}
    then the two auctions' revenue and welfare are the same.
\end{lemma}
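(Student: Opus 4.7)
The plan is to observe that, under a common prior, the expected-welfare and expected-revenue functionals are both linear in the interim allocation and, crucially, use the \emph{same} weight function against every buyer. This means each functional depends on the allocation only through the pointwise sum $\sum_i x_i(v)$, which is exactly the quantity assumed to coincide for the two mechanisms.

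Concretely, I would proceed as follows. Let the common prior distribution be $F$ with density $f$ and virtual value $\phi(v) = v - \frac{1-F(v)}{f(v)}$. By the welfare formula recalled in Section~\ref{sec:Preliminary},
\begin{align*}
    \mathrm{WEL}_x \;=\; \sum_{i=1}^n \int_0^\infty v \, x_i(v)\, f(v)\, \mathrm{d}v
    \;=\; \int_0^\infty v \left(\sum_{i=1}^n x_i(v)\right) f(v)\, \mathrm{d}v,
\end{align*}
where the swap of sum and integral uses the common density $f$. The same rewriting applies to $\mathrm{WEL}_y$. Since by hypothesis $\sum_i x_i(v) = \sum_i y_i(v)$ for every $v$, the two integrals are identical, so $\mathrm{WEL}_x = \mathrm{WEL}_y$.

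For revenue I would use the Myerson virtual-value representation, which is valid for any BIC auction under interim IR and hence for both mechanisms compared in the lemma:
\begin{align*}
    \mathrm{REV}_x \;=\; \sum_{i=1}^n \int_0^\infty \phi(v)\, x_i(v)\, f(v)\, \mathrm{d}v
    \;=\; \int_0^\infty \phi(v) \left(\sum_{i=1}^n x_i(v)\right) f(v)\, \mathrm{d}v,
\end{align*}
and analogously for $\mathrm{REV}_y$. Again the hypothesis $\sum_i x_i(v) = \sum_i y_i(v)$ implies $\mathrm{REV}_x = \mathrm{REV}_y$, completing the argument.

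There is essentially no technical obstacle: the only thing to be careful about is that the factorization into a single integral against $\sum_i x_i(v)$ relies on the prior being common (so that the same weight $f(v)$ and the same virtual value $\phi(v)$ appear in every summand), and that Myerson's revenue identity is being applied to BIC mechanisms — both hypotheses are explicitly in force. The lemma is really a ``change of variables'' observation that lets the subsequent argument (Lemma~\ref{lem:implementation_of_simple_auctions} producing a deterministic DSIC auction with matching pointwise sum of interim allocations) immediately conclude that revenue and welfare are preserved, without needing the two mechanisms to agree buyer-by-buyer.
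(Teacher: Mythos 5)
Your proof is correct and follows essentially the same route as the paper: both rewrite welfare and revenue as a single integral of $v f(v)$ (resp.\ $\phi(v) f(v)$) against the pointwise sum $\sum_i x_i(v)$, which the common prior makes possible, and then invoke the hypothesis that the sums agree. The only cosmetic difference is that you spell out the role of the common density and Myerson's identity a bit more explicitly than the paper does.
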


By Lemma \ref{Lem:sum_equal}, we can get Lemma \ref{lem:implementation_of_simple_auctions}.
\begin{lemma}\label{lem:implementation_of_simple_auctions}
    All (revenue, welfare) pairs of piecewise auctions can be implemented by deterministic DSIC auctions.
\end{lemma}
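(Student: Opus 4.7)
My plan is to prove Lemma \ref{lem:implementation_of_simple_auctions} by explicit construction, using Lemma \ref{Lem:sum_equal} as a reduction: it suffices to exhibit, for each piecewise auction, a deterministic DSIC mechanism whose pointwise sum of interim allocations $\sum_{i=1}^{n} x_i^{\det}(v)$ coincides with $n\hat{x}(F(v))$. By Lemma \ref{Lem:sum_equal} this immediately yields equality of expected welfare and expected revenue. The target sum is piecewise simple: zero on the low regime $[0,v(r_1)]$; equal to $nq^{n-1}$ on the Border-tight regime where $\hat{x}(q)=q^{n-1}$; and equal to a constant $nk$ in Type 1 or $nc=(1-r_2^n)/(1-r_2)$ in Type 2 on the constant-allocation regime.

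The construction has two layers. On the Border-tight regime, the target $nq^{n-1}$ equals the sum of per-bidder interim allocations of a symmetric second-price auction restricted to the corresponding value range, so I would implement this portion by a second-price rule with reserves placed at the regime boundaries. On the constant regime, an ordinary second-price rule produces the strictly increasing $q^{n-1}$ and is therefore ruled out; instead I would fix $n$ measurable ``trigger sets'' $T_1,\ldots,T_n$ in opponent-type space, each of product mass equal to the required constant, and declare bidder $i$ to win iff his own value lies in that regime and $\mathbf{v}_{-i}\in T_i$. Choosing the trigger events to be pairwise disjoint in the joint type space gives single-item feasibility, and making each $T_i$ independent of $v_i$ within its regime gives a monotone own-report allocation, i.e., DSIC. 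The two layers are then combined in priority order (the Border-tight rule overrides the constant-regime rule, which overrides no allocation), and crossing any regime boundary can only switch a bidder from losing to winning, which is exactly the DSIC monotonicity condition. Interim allocations are then computed regime by regime---$q^{n-1}$ per bidder from the second-price layer, and the required constant per bidder from the trigger layer---and summing over $n$ bidders reproduces the target, closing the argument via Lemma \ref{Lem:sum_equal}.

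The principal technical obstacle is realizing the constant-regime allocation with disjoint trigger sets of the correct mass. The admissibility bound $k\le(r_2^n-r_1^n)/(n(r_2-r_1))$ in Definition \ref{def:piecewise auctions} is precisely the Border slack at $q=r_1$ after accounting for the second-price layer on $(r_2,1]$; this is what makes it possible to fit $n$ disjoint triggers of mass $k$ into the opponent-type region not already committed to the second-price component. Designing the $T_i$ to be simultaneously mass-correct, pairwise disjoint, and compatible with the second-price layer (and analogously in Type 2, where the constant regime sits at the top rather than in the middle) is where the bulk of the bookkeeping will lie. Degenerate corner cases such as $r_1=0$, $r_2=1$, or $k$ attaining the boundary of its admissible range would need to be handled separately, but in each of them the construction collapses to a standard mechanism such as the plain second-price auction or a posted-price rule.
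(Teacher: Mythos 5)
Your high-level strategy matches the paper's: reduce via Lemma \ref{Lem:sum_equal} to matching the total interim allocation $n\hat{x}(q)$ pointwise, implement the Border-tight part by a second-price-with-reserve rule, and layer a separate deterministic rule onto the constant regime. The gap is your specific plan for that constant layer. You insist that the $n$ trigger sets $T_i \subseteq [0,1]^{n-1}$ each have prior mass exactly $k$, so that every bidder individually has interim allocation $k$ there, and that the induced winning events $E_i = \{q_i \in (r_1,r_2],\ \mathbf{q}_{-i} \in T_i\}$ be pairwise disjoint. This is impossible whenever $k > r_1$. Already for $n=2$, disjointness of $E_1$ and $E_2$ forces $T_1 \cap (r_1,r_2]$ or $T_2 \cap (r_1,r_2]$ to be measure-zero; since compatibility with the second-price layer lets you take $T_i \subseteq [0,r_2]$, one of the $T_i$ is then pinned inside $[0,r_1]$ and so has mass at most $r_1 < k$, contradicting $|T_i| = k$. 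Yet $k > r_1$ is allowed by Definition \ref{def:piecewise auctions} (e.g., $r_1 = 0$, $r_2 = 1$, any $k \in (0, 1/n]$), so your construction misses a nonempty range of piecewise auctions.

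The root cause is that you under-exploit Lemma \ref{Lem:sum_equal}: it only requires the \emph{sum} of per-bidder interim allocations to equal $n\hat{x}(q)$, not that each bidder individually achieve $\hat{x}(q)$. The paper's constant-regime rule uses exactly this slack via a hierarchical, asymmetric tie-break. Bidder 1 is favored and gets constant-regime interim allocation $r_2^{n-1}$ (or $\lambda r_2^{n-1}$ when $k$ is small), which exceeds $k$; a bidder $i>1$ wins only when every smaller-indexed bidder sits below $r_1$ and a coin flip encoded in bidder 1's low-range quantile succeeds, giving interim $\lambda r_1^{i-1} r_2^{n-i}$; the scalar $\lambda \in [0,1]$ is then tuned so the sum equals $nk$. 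To repair your argument, drop the per-$i$ constraint $|T_i| = k$ and instead allow unequal trigger masses summing to $nk$, choosing the $T_i$ hierarchically so their events pack disjointly inside $[0,r_2]^n \setminus [0,r_1]^n$; the admissibility bound $k \le (r_2^n - r_1^n)/(n(r_2-r_1))$ is precisely what makes such an asymmetric packing possible.
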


Lemma \ref{lem:inner-pairs} further shows that, for any fixed welfare level, piecewise auctions span a continuous interval in the revenue dimension, so all interior points on the same welfare line are also implemented by piecewise auctions (and hence by deterministic DSIC mechanisms).
Therefore, the entire $(\mathrm{REV}, \mathrm{WEL})$ region attainable by randomized auctions can be implemented by deterministic DSIC auctions, which completes the proof of Theorem \ref{thm:equal_to_random}.
\begin{lemma}\label{lem:inner-pairs}
    One piecewise auction can continuously transfer to another through a sequence of piecewise auctions with the welfare fixed.
\end{lemma}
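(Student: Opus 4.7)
\textbf{Proof plan for Lemma \ref{lem:inner-pairs}.} The plan is to view each piecewise auction as a point in a low-dimensional parameter space, observe that welfare is smooth there, and then show by explicit deformations that the welfare level set is path-connected; continuity of the revenue functional along the connecting path then yields the lemma.

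I would first parameterize: a Case~1 auction is a triple $(r_1, r_2, k)$ in the compact region $D_1 = \{(r_1, r_2, k) : 0 \le r_1 \le r_2 \le 1,\; 0 \le k \le K(r_1, r_2)\}$ with $K(r_1, r_2) = (r_2^n - r_1^n)/(n(r_2 - r_1))$, and a Case~2 auction is a pair $(r_1, r_2)$ in $D_2 = \{(r_1, r_2) : 0 \le r_1 \le r_2 \le 1\}$; in both cases continuity of $P$ at the kink points recovers the full function $P$ from the parameters. Substituting $P$ into $\textrm{WEL} = n \int_0^1 v'(q) P(q)\, \mathrm{d}q$ produces an explicit smooth function of the parameters. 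Since $v'(q) > 0$ on the interior of $[0,1]$, welfare is strictly monotone in $k$ for fixed $(r_1, r_2)$ in Case~1, so by the implicit function theorem the equation $\textrm{WEL} = c$ is solved by a continuous map $k = k(r_1, r_2)$ wherever a solution exists; sliding $(r_1, r_2)$ therefore traces a smooth surface inside the welfare-$c$ level set in $D_1$. An analogous monotonicity-plus-implicit-function argument produces a curve inside the welfare-$c$ level set in $D_2$.

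The core geometric input is a clean gluing of the two families: on the face $k = K(r_1, r_2)$ of $D_1$, the linear piece of $P$ is precisely the chord of the Border boundary between $(r_1, (1 - r_1^n)/n)$ and $(r_2, (1 - r_2^n)/n)$, and replacing this chord by the Border boundary arc on $[r_1, r_2]$ (and completing with a linear tail on $[r_2, 1]$) produces a Case~2 auction; in this way the two parameter regions are glued along a one-parameter boundary curve. The main obstacle is ruling out that the welfare-$c$ level set inside $D_1 \cup D_2$ splits into disconnected components across this gluing. To handle this I would invoke compactness together with Lemma~\ref{lem:solutions_of_o2}: the revenue-maximizing and revenue-minimizing piecewise auctions at welfare $c$ both exist and lie in $D_1 \cup D_2$, and from any other piecewise auction at welfare $c$ one can follow a monotone arc (sliding one parameter and re-solving the others to preserve $\textrm{WEL} = c$) until the revenue reaches one of these extremes. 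Since both extremizers can be reached this way, every piecewise auction at welfare $c$ is path-connected to the revenue maximizer, giving path-connectedness of the entire level set. Finally, applying the intermediate value theorem to the revenue map along any such connecting arc realizes every intermediate revenue level by a piecewise auction, completing the proof.
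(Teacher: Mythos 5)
Your parameterization of the two families and the observation that they glue along the face $k = K(r_1, r_2)$ are both reasonable, and indeed the paper's proof implicitly works in the same parameter space. However, the step you call the ``core geometric input'' --- path-connectedness of the welfare level set inside $D_1 \cup D_2$ --- is asserted rather than established, and this is exactly where the content of the lemma lies. Your claim that ``from any other piecewise auction at welfare $c$ one can follow a monotone arc \ldots until the revenue reaches one of these extremes'' would require showing that the revenue functional, restricted to the welfare level set, has no critical points away from the two global extremizers, and that such an arc cannot terminate at a boundary of $D_1 \cup D_2$ before reaching them. Neither is argued, and Lemma~\ref{lem:solutions_of_o2} does not supply it: that lemma identifies the revenue extremizers over the full (convex) class of feasible $P$-functions, which happen to be piecewise, but says nothing about the topology of the piecewise family at a fixed welfare level. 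Even granting your monotone-arc claim, the concluding inference is a non-sequitur: knowing each point can reach \emph{some} extremizer does not show they all reach the \emph{same} one, nor that the two extremizers are connected to each other, so path-connectedness of the level set does not follow.

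The paper avoids this entirely by being explicit: it defines a canonical target $\hat{x}^{\rho^\star}(q) = \mathbb{I}\{q > \rho^\star\} q^{n-1}$ with welfare exactly $c$, and for each of the two conditions writes down a one-parameter deformation (slide $r_2$ toward $1$ or toward $\rho^\star$ and re-solve a scalar equation for $r_1$ or $k$, respectively) that stays inside the piecewise family with welfare pinned to $c$ at every $t$ and lands on $\hat{x}^{\rho^\star}$ at $t = 1$. Concatenating two such paths through the common endpoint gives the lemma. If you want to salvage your route, you would need to replace the vague monotone-arc step with this kind of explicit homotopy, or otherwise prove that the welfare-$c$ level set in $D_1 \cup D_2$ is connected; at that point you would essentially be reproducing the paper's construction.
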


\section{Implementations with Respect to Interim Allocations}\label{sec:interim}
\subsection{Two Buyers}\label{sec:2buyer}
In this subsection, we analyze the necessary and sufficient condition such that an interim allocation profile can be implemented by a deterministic DSIC auction, also known as Border's Theorem in the deterministic case.
We focus on the two-buyer case, with continuous and atomless prior distributions, which are not necessarily symmetric.
The analysis is conducted in the quantile space.
\begin{theorem}\label{Thm:Border_deterministic}
    A strictly increasing interim allocation profile $(\hat{x}_1(q_1), \hat{x}_2(q_2))$ is implementable by a deterministic DSIC auction if and only if
    \begin{align}\label{Border_deterministic}
        \forall q_1 \in [0, 1], ~\hat{x}_2(\hat{x}_1(q_1)) \leq q_1.
    \end{align}
    Moreover, if the item is always sold for the deterministic DSIC auction, then (\ref{Border_deterministic}) turns into
    \begin{align}\label{Border_deterministic2}
		\forall q_1, ~\hat{x}_2(\hat{x}_1(q_1)) = q_1.
    \end{align}
\end{theorem}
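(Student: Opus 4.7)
The plan is to work entirely in quantile space, where each $q_i$ is uniform on $[0,1]$, and to represent every deterministic DSIC auction by its winning regions $W_1, W_2 \subseteq [0,1]^2$ (the set of profiles where buyer $1$ or buyer $2$ gets the item). DSIC together with determinism forces each $W_i$ to be upward closed in $q_i$ for every fixed $q_{-i}$, so there exist nondecreasing threshold functions $g_1, g_2$ with $W_i = \{(q_1,q_2) : q_i \geq g_i(q_{-i})\}$; single-item feasibility becomes $W_1 \cap W_2 = \emptyset$. Integrating out $q_{-i}$ gives $\hat{x}_i(q_i) = \mu\{q_{-i} : g_i(q_{-i}) \leq q_i\}$, and strict monotonicity of $\hat{x}_i$ lets me treat $g_i$ as essentially $\hat{x}_i^{-1}$.

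For necessity I will argue by contradiction. Suppose $\hat{x}_2(\hat{x}_1(q_1^\circ)) > q_1^\circ$ for some $q_1^\circ$; set $q_2^\circ := \hat{x}_1(q_1^\circ)$ and pick $q_1^\star \in (q_1^\circ, \hat{x}_2(q_2^\circ))$. Strict monotonicity of $\hat{x}_1$ gives $\hat{x}_1(q_1^\star) > q_2^\circ$, so $(q_1^\star, q_2^\circ) \in W_1$; and $q_1^\star < \hat{x}_2(q_2^\circ)$ gives $(q_1^\star, q_2^\circ) \in W_2$, contradicting disjointness. This yields condition (\ref{Border_deterministic}).

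For sufficiency I will explicitly construct an auction by declaring buyer $1$ the winner whenever $q_2 \leq \hat{x}_1(q_1)$, and otherwise buyer $2$ the winner whenever $q_1 \leq \hat{x}_2(q_2)$ (and no one otherwise). Disjointness is built in, and monotonicity of $x_i$ in its own coordinate follows from monotonicity of $\hat{x}_i$, so Myerson's payment formula produces a deterministic DSIC mechanism. Buyer $1$'s induced interim allocation equals $\hat{x}_1$ by inspection. The key step is verifying that buyer $2$'s induced interim allocation equals $\hat{x}_2(q_2)$: this is the measure of $\{q_1 : q_1 \leq \hat{x}_2(q_2) \text{ and } q_2 > \hat{x}_1(q_1)\}$, which equals $\hat{x}_2(q_2)$ provided $\hat{x}_2(q_2) \leq \hat{x}_1^{-1}(q_2)$, equivalently $\hat{x}_1(\hat{x}_2(q_2)) \leq q_2$. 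I will derive this ``dual'' inequality from the hypothesis $\hat{x}_2(\hat{x}_1(q_1)) \leq q_1$ by substituting $q_1 = \hat{x}_1^{-1}(q_2)$ (or equivalently applying $\hat{x}_2^{-1}$). The hard part will be this dual-inequality step, together with clean bookkeeping of strict vs.\ weak inequalities at boundary/jump points of the threshold functions; strict monotonicity is precisely what keeps these boundary cases null sets and harmless.

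For the ``moreover'' claim, the always-sold assumption forces $W_2 = [0,1]^2 \setminus W_1 = \{q_2 > \hat{x}_1(q_1)\}$ exactly, so $\hat{x}_2(q_2) = \mu\{q_1 : q_2 > \hat{x}_1(q_1)\} = \hat{x}_1^{-1}(q_2)$, which rearranges to $\hat{x}_2(\hat{x}_1(q_1)) = q_1$ and gives (\ref{Border_deterministic2}).
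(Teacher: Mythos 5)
Your overall framing (quantile space, winning regions $W_1, W_2$, disjointness) matches the paper's, and your ``if'' direction and the always-sold ``moreover'' claim are essentially sound. The problem is in the ``only if'' direction for the general inequality~(\ref{Border_deterministic}).

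You write that DSIC plus determinism gives you upward closedness of $W_i$ in $q_i$, ``so there exist \emph{nondecreasing} threshold functions $g_1, g_2$ with $W_i = \{q_i \geq g_i(q_{-i})\}$,'' and from this you conclude $W_1 = \{q_2 \leq \hat{x}_1(q_1)\}$, $W_2 = \{q_1 \leq \hat{x}_2(q_2)\}$, which is what your contradiction argument uses (``$\hat{x}_1(q_1^\star) > q_2^\circ$, so $(q_1^\star,q_2^\circ) \in W_1$''). But DSIC only gives the existence of the threshold $g_1(q_2)$ in buyer $1$'s own coordinate; it says nothing about how $g_1$ varies in $q_2$. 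The slice $\{q_2 : (q_1,q_2)\in W_1\}$ has measure $\hat{x}_1(q_1)$ but need not be the interval $[0, \hat{x}_1(q_1))$; the ``seller keeps the item'' region $W_0$ can be interleaved with $W_1$ in any pattern that still respects upward closedness in $q_1$ (for instance, a sawtooth $g_1$ still induces a strictly increasing $\hat{x}_1$). Strict monotonicity of $\hat{x}_1$ does not rescue this: it constrains the \emph{measures} of the level sets of $g_1$, not the geometry of $W_1$. So the implication ``$q_2 < \hat{x}_1(q_1) \Rightarrow (q_1,q_2)\in W_1$'' can fail, and your contradiction does not go through. This is exactly the point the paper addresses with its measure-preserving rearrangement: starting from the original three-coloring $C$, it pushes buyer $1$'s region down to $\{q_2 \le \hat{x}_1(q_1)\}$, proves that this does not disturb buyer $2$'s region (using DSIC upward closedness of $W_2$ in $q_2$), repeats for buyer $2$, and only \emph{then} reads off the non-crossing of the two monotone curves. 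You assume this rearranged geometry from the start without justifying it, which is the genuine missing step. (As an aside, a rearrangement-free fix is possible: from the column decomposition one gets $\hat{x}_1(q_1) \le g_2(q_1)$, and then for any $q_1'$ with $g_2(q_1') \le \hat{x}_1(q_1)$ strict monotonicity forces $q_1' \le q_1$, giving $\hat{x}_2(\hat{x}_1(q_1)) = \mu\{q_1': g_2(q_1')\le \hat{x}_1(q_1)\} \le q_1$ -- but that is a different argument from the one you wrote.)

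The ``moreover'' part, by contrast, does not suffer from this gap: when the item is always sold, $W_0 = \emptyset$, so each column is exactly partitioned as $W_1 = [0, g_2(q_1))$, $W_2 = [g_2(q_1),1]$, which forces $g_2(q_1) = \hat{x}_1(q_1)$ and hence the nice geometry you want -- no rearrangement needed there.
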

The proof sketch is as follows: we work on the quantile space $[0, 1]^2$ and represent any deterministic allocation rule as a three-coloring $C \in \{0, 1, 2\}$, where 0, 1, and 2 indicate no sale, buyer 1 wins, and buyer 2 wins, respectively.
We then perform a measure-preserving rearrangement to obtain a new coloring $C'$ that keeps each buyer's interim allocation unchanged.
Concretely, for each $q_1$, the measure of points colored 1 in column $q_1$ remains $\hat{x}_1(q_1)$, and for each $q_2$, the measure of points colored 2 in row $q_2$ remains $\hat{x}_2(q_2)$.
This rearrangement produces a structured geometry where the plane is separated by two monotone boundary curves $q_2 = \hat{x}_1(q_1)$ and $q_1 = \hat{x}_2(q_2)$, and each buyer's winning set has a threshold form.
In this form, monotonicity in each buyer's own quantile is immediate, so the induced allocation can be implemented by a deterministic DSIC auction via the standard critical-threshold characterization.
Finally, single-item feasibility requires the two winning regions to be disjoint, which is equivalent to the two monotone boundaries not crossing, and this gives $\forall q$, $\hat{x}_2(\hat{x}_1(q))\le q$.
The full proof of Theorem \ref{Thm:Border_deterministic} is deferred to Appendix \ref{Proof-Border_deterministic}.


\subsubsection*{Nonequivalence of Randomized BIC and Deterministic DSIC Auctions with Respect to Interim Allocations}
Based on Theorem \ref{Thm:Border_deterministic}, we can prove the nonequivalence between randomized BIC and deterministic DSIC auctions with respect to interim allocations.
A straightforward approach is to consider the sum of interim allocations given a valuation $v$.
\begin{corollary}\label{lem:corollary}
    There exists an interim allocation profile $(x_1(v), \cdots, x_n(v))$ that is implementable by a randomized BIC auction, but there is no deterministic DSIC auction with interim allocation profile $(y_1(v), \cdots, y_n(v))$ such that for any $v$,
    \begin{align*}
        \sum_{i = 1}^n x_i(v) = \sum_{i = 1}^n y_i(v).
    \end{align*}
\end{corollary}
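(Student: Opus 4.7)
My plan is to exhibit an explicit two-bidder counterexample and use the rearranged-region structure underlying Theorem~\ref{Thm:Border_deterministic} to rule out any matching deterministic DSIC auction. I would take $n=2$ i.i.d.\ uniform bidders on $[0,1]$ and the symmetric profile $\hat{x}_1(q)=\hat{x}_2(q)=\tfrac{1}{3}+\tfrac{q}{3}$, so that the pointwise sum is $s(q):=\hat{x}_1(q)+\hat{x}_2(q)=\tfrac{2}{3}+\tfrac{2q}{3}$. Monotonicity is immediate, and I would verify Border's theorem by integrating $\int_q^1(\tfrac{1}{3}+\tfrac{t}{3})\,dt=\tfrac{1}{6}(3-2q-q^2)$ and reducing the required inequality to $q_1^2+q_2^2-6q_1q_2+2q_1+2q_2\ge 0$ on $[0,1]^2$; this is easily confirmed by inspecting the unique interior critical point at $(\tfrac{1}{2},\tfrac{1}{2})$ (where the expression equals $1$) and the four boundary edges. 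Hence $(\hat{x}_1,\hat{x}_2)$ is implementable by a randomized BIC auction.

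Next I would assume for contradiction that some deterministic DSIC auction induces interim allocations $(\hat{y}_1,\hat{y}_2)$ with $\hat{y}_1(v)+\hat{y}_2(v)=s(v)$ for every $v$. The rearrangement argument behind Theorem~\ref{Thm:Border_deterministic} shows that, after a measure-preserving rearrangement, the two winning regions take the threshold form $\{q_2\le\hat{y}_1(q_1)\}$ and $\{q_1\le\hat{y}_2(q_2)\}$ and must be essentially disjoint; this yields the necessary inequality $\hat{y}_2(\hat{y}_1(q))\le q$ for all $q\in[0,1]$, whose derivation uses only monotonicity (not strict monotonicity) of the targets. Evaluating at $q=0$ gives $\hat{y}_2(\hat{y}_1(0))=0$, and combining with $\hat{y}_2(0)\le\hat{y}_2(\hat{y}_1(0))=0$ and the sum identity $\hat{y}_1(0)+\hat{y}_2(0)=s(0)=\tfrac{2}{3}$ forces $\hat{y}_1(0)=\tfrac{2}{3}$ and $\hat{y}_2\equiv 0$ on $[0,\tfrac{2}{3}]$. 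Then
\[
\hat{y}_1(\tfrac{2}{3}) = s(\tfrac{2}{3}) - \hat{y}_2(\tfrac{2}{3}) = \tfrac{2}{3}+\tfrac{4}{9} = \tfrac{10}{9} > 1,
\]
contradicting the constraint $\hat{y}_1\le 1$. Hence no deterministic DSIC auction has a matching sum.

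The expected main obstacle is choosing a sum function $s$ that is simultaneously BIC-feasible and structurally incompatible with the deterministic side. The condition $\hat{y}_2(\hat{y}_1(q))\le q$ pins $\hat{y}_1(0)=s(0)$ and forces $\hat{y}_2\equiv 0$ on $[0,s(0)]$, which in turn demands $s(s(0))\le 1$; an affine sum $s(q)=\alpha+\beta q$ threads this gap precisely when $\alpha+\beta\alpha>1$ while keeping Border's constraint satisfied, and the choice $\alpha=\beta=\tfrac{1}{3}$ (giving $s(s(0))=\tfrac{10}{9}$) is a clean instance. Verifying Border's asymmetric condition at every $(q_1,q_2)\in[0,1]^2$—rather than only at the symmetric diagonal—is the main computational burden, but it reduces to the polynomial inequality above.
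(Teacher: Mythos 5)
Your proposal is correct and follows essentially the same strategy as the paper: apply the necessary condition from Theorem~\ref{Thm:Border_deterministic} to rule out any deterministic DSIC auction whose interim-allocation sum matches that of a chosen randomized BIC auction. The paper instantiates the randomized side as a mixture (probability $p$ VCG, probability $1-p$ dictatorship to buyer~1) with sum $2pq + 1-p$; your choice $\hat{x}_1(q)=\hat{x}_2(q)=\frac13+\frac q3$ gives the same family of sums with $p=\frac13$, and since the corollary only depends on the sum, the two examples are equivalent. The differences are: (i) you verify BIC-implementability by checking the asymmetric Border inequality directly, whereas the paper gets it for free from being an explicit mixture of feasible mechanisms; and (ii) you reach the contradiction $\hat{y}_1(\tfrac23)=\tfrac{10}{9}>1$ in one step, while the paper's choice $p=0.75$ drives the recurrence $u_n + u_{n+2}=2p u_{n+1}+1-p$ out of $[0,1]$ only at $u_5$. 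Your parameter choice is thus more economical; in fact your one-step argument is precisely the $n=1$ step of the paper's iteration in the regime where $u_3>1$.

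One small point worth being careful about, which your proposal flags and which is equally present in the paper's own proof of the corollary: Theorem~\ref{Thm:Border_deterministic} is stated for strictly increasing interim allocations, but the hypothetical $(\hat{y}_1,\hat{y}_2)$ you argue about are not strictly increasing (indeed you prove $\hat{y}_2\equiv 0$ on $[0,\tfrac23]$). You are right that the paper's rearrangement proof of the ``only if'' direction never actually invokes strict monotonicity, so the necessary inequality $\hat{y}_2(\hat{y}_1(q))\leq q$ does hold for any monotone deterministic DSIC allocation; but strictly speaking, to cite the theorem as stated one would need to either relax the hypothesis or point to the proof rather than the statement. The paper does exactly the same thing (applying the equality form $\hat{x}_2(\hat{x}_1(q))=q$ to a not-necessarily-strictly-increasing hypothetical), so this is not a gap in your argument relative to the paper — just a point of informality inherited from the paper.

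Finally, a minor typographical slip: you write that an affine $s(q)=\alpha+\beta q$ works when ``$\alpha + \beta\alpha > 1$'', but with your parametrization (where $\alpha,\beta$ are the coefficients of the individual allocation and $s(q)=2\alpha+2\beta q$) the correct condition is $s(s(0)) = 2\alpha(1+2\beta)>1$, which evaluates to $\tfrac{10}{9}$ at $\alpha=\beta=\tfrac13$. The computation you actually carry out is right; only the displayed inequality is mis-transcribed.
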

We outline the idea of the proof.
We consider the two-buyer case and analyze the following randomized BIC auction:
\begin{itemize}
    \item with probability $p$, run VCG auction.
    \item with probability $1 - p$, send the item for free to buyer 1.
\end{itemize}

We know that the item is always sold. 
By Theorem ~\ref{Thm:Border_deterministic}, we have $\hat{x}_2(\hat{x}_1(q_1)) = q_1$ for all $q_1$.
This allows us to derive a series of interim allocations under a deterministic DSIC auction.
However, the interim allocation may exceed 1, thereby proving the corollary.

\subsection{Three Buyers with Constant Interim Allocations}\label{sec:3buyer}
Theorem \ref{Thm:Border_deterministic} becomes extremely complex when extended to cases with three or more buyers.
Suppose there are three buyers with constant interim allocations, i.e. $x_i(v) = c_i, \forall v, i = 1, 2, 3$.
We provide a necessary and sufficient condition for interim allocations to be implemented by a deterministic DSIC auction.
\begin{theorem}\label{Thm:three_buyer_border}
    The constant interim allocation profile for 3 buyers is implementable by a deterministic DSIC auction if and only if 
    \begin{eqnarray}\label{three_buyers_condition}
        1 - c_1 - c_2 - c_3 \geq 2 \sqrt{c_1 c_2 c_3}.
    \end{eqnarray}
\end{theorem}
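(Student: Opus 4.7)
The plan is to translate the question into a geometric feasibility problem in the unit cube of quantiles, reduce it to a finite-dimensional system in six ``shadow'' parameters, and finally collapse that system into a single quadratic discriminant inequality. For each buyer $i$, DSIC monotonicity forces the ex-post winning region $A_i \subseteq [0,1]^3$ to be upward closed in $q_i$, so $A_i = \{q : q_i \geq \tau_i(q_{-i})\}$ for some measurable threshold $\tau_i$. The constant interim requirement $\hat{x}_i(q_i) \equiv c_i$ says that the CDF of $\tau_i(q_{-i})$, viewed as a random variable under uniform $q_{-i}$, is constant on $[0,1]$, which forces $\tau_i \in \{0,1\}$ almost surely. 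Up to null sets, each $A_i$ is therefore a cylinder $[0,1]_{q_i} \times S_i$ with a measurable base $S_i \subseteq [0,1]^{[3] \setminus \{i\}}$ of area $c_i$, and the single-item feasibility constraint is exactly that the three cylinders are pairwise disjoint.

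Next, I would pass to the shadow measures $\beta_{j,i} := |\pi_i(S_j)|$ for $i \neq j$. A Fubini computation of $|A_j \cap A_k|$ along the axis shared by $S_j$ and $S_k$ shows that cylinder disjointness forces the two shadows on each axis $i$ to be essentially disjoint, hence $\beta_{j,i} + \beta_{k,i} \leq 1$ whenever $\{j,k\} = [3]\setminus\{i\}$. Since $S_j$ sits inside the product of its two projections, the area bound $c_j = |S_j| \leq \beta_{j,i}\beta_{j,k}$ also holds. Because inflating any $\beta_{j,i}$ only relaxes the product constraints, I can take the three disjointness inequalities to be tight and introduce parameters $u := \beta_{3,1}$, $v := \beta_{3,2}$, $w := \beta_{2,3}$. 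The implementability question then becomes: do $u,v,w \in [0,1]$ exist with $uv \geq c_3$, $(1-u)w \geq c_2$, and $(1-v)(1-w) \geq c_1$? Moreover, given any such $(u,v,w)$ one recovers an actual mechanism by laying out disjoint intervals of length $\beta_{j,i}$ on each quantile axis and taking $S_j$ to be any measurable subset of the corresponding product rectangle of area exactly $c_j$; disjointness of the shadows then forces pairwise disjointness of the cylinders, and the induced ex-post allocation is trivially DSIC.

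The analytic heart of the argument is to characterize feasibility of this three-variable system. Fixing $u$ and pushing $v$ and $w$ down to the minimum values allowed by the $c_3$ and $c_2$ constraints, the remaining inequality $(1 - c_3/u)(1 - c_2/(1-u)) \geq c_1$ clears to a quadratic $h(u) := -(1-c_1)u^2 + (1-c_1-c_2+c_3)u - c_3(1-c_2) \geq 0$ on $(c_3, 1-c_2)$. One checks directly that $h(c_3) \leq 0$ and $h(1-c_2) \leq 0$, and that the vertex of this downward parabola lies inside the interval, so solvability is equivalent to the discriminant of $h$ being nonnegative. A short expansion proves the algebraic identity
\[
(1-c_1-c_2+c_3)^2 - 4(1-c_1)(1-c_2)c_3 \;=\; (1-c_1-c_2-c_3)^2 - 4c_1c_2c_3,
\]
which rewrites the discriminant condition as $(1-c_1-c_2-c_3)^2 \geq 4c_1c_2c_3$; since pairwise disjointness of the cylinders already guarantees $c_1+c_2+c_3 \leq 1$, taking square roots yields exactly the theorem's inequality. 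I expect the main obstacle to be the rigorous $\{0,1\}$-reduction of $\tau_i$ from the constant interim condition (essentially a statement about CDFs of bounded random variables) and carrying the discriminant identity through cleanly; once these are in hand, both directions of the ``if and only if'' are a direct verification.
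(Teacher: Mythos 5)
Your proposal mirrors the paper's argument step for step: both reduce each DSIC winning region to a cylinder over a base $S_i$ (via the critical-value characterization plus the constant-interim constraint), use single-item disjointness to show the shadows on each shared axis are disjoint, bound $c_j \le |\pi_i(S_j)|\cdot|\pi_k(S_j)|$, collapse the six shadow lengths to a three-variable system, eliminate two variables to obtain a one-variable quadratic, and read off $(1-c_1-c_2-c_3)^2 \ge 4c_1c_2c_3$ from its discriminant together with $c_1+c_2+c_3 \le 1$. The choice of free parameters ($u,v,w$ versus the paper's $y_1,y_2,y_3$) and the way you conclude solvability (endpoint sign checks plus vertex location rather than the paper's explicit root formula in the ``if'' direction) are cosmetic variations on the same argument, and your sufficiency construction (disjoint intervals supporting rectangular bases) is likewise the paper's.
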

\begin{proof}
    We first prove the ``only if'' direction: given constant interim allocation rule $x_1, x_2, x_3$ in a deterministic DSIC auction, then condition (\ref{three_buyers_condition}) holds.
    
    Due to the property of deterministic DSIC and and constant interim allocation rule, we have the following statement:
    \vskip 5pt
    \noindent \emph{Except for a zero measure of $(b, c) \in [0, 1]^2$, either $x_1(v, b, c) = 0, \forall v$ or $x_1(v, b, c) = 1, \forall v$.}
    \vskip 5pt
    It means that a buyer's allocation rule does not depend on his own value.
    In the following proof we ignore the unsatisfactory set which is measured 0 and has no effect to results. 
    
    Define three valuation domain sets: $A = \{(b, c): x_1(v_1, b, c) = 1\}$ for $\forall v_1$, $B = \{(a, c): x_2(a, v_2, c) = 1\}$ for $\forall v_2$ and $C = \{(a, b): x_3(a, b, v_3) = 1 \}$ for $\forall v_3$.
    Each set represents two players' valuation profile when the mechanism allocates the item to the remaining player.
    
    Define the projection of set A in each dimension as $A_b = \{b: \exists c ~s.t. ~(b, c) \in A\}$ and $A_c = \{c: \exists b ~s.t. ~(b, c) \in A\}$.
    Define $B_a, B_c, C_a, C_b$ similarly.
    
    We next claim that $B_a \cap C_a = \emptyset$, $A_b \cap C_b = \emptyset$, $A_c \cap B_c = \emptyset$.
    Suppose $B_a \cap C_a \neq \emptyset$, let $a^*$ be an element in $B_a \cap C_a$.
    Assume $(a^*, c^*) \in B$ and $(a^*, b^*) \in C$.
    By definition of $B$,  $x_2(a^*, b^*, c^*) = 1$, while by definition of $C$, $x_3(a^*, b^*, c^*) = 1$ which leads a contradiction.
    So $B_a$ and $C_a$ do not overlap.
    Same argument applies for $A_b$ and $C_b$, $A_c$ and $B_c$.
    
    Define $y_1 = |A_b|$, $y_2 = |B_c|$, and $y_3 = |C_a|$.
    Then we have $|C_b| \leq 1 - y_1$, $|A_c| \leq 1 - y_2$ and $|B_a| \leq 1 - y_3$ (since $A_b \cap C_b = \emptyset$).
    
    The interim allocation $c_1 = |A|$ is bounded by $|A_b||A_c| \leq y_1(1 - y_2)$, since $A_b$ and $A_c$ are projections of $A$.
    
    Thus, the interim allocation profile $c_1, c_2, c_3$ is implementable implies that there exists $y_1, y_2, y_3 \in [0, 1]$ such that
    \begin{eqnarray*}
        c_1 \leq y_1(1 - y_2), \\
        c_2 \leq y_2(1 - y_3), \\
        c_3 \leq y_3(1 - y_1).
    \end{eqnarray*}
    
    We have
    \[c_2 \leq y_2(1 - y_3) \leq (1 - \frac{c_1}{y_1})(1 - \frac{c_3}{1 - y_1}).\]
    That is,
    \[(1 - c_2) y_1^2 + (c_2 - c_1 + c_3 - 1) y_1 + c_1(1 - c_3) \leq 0.\]
    
    Due to the formula above has a solution for $y_1$, we have
    \begin{align*}
        (c_2 - c_1 + c_3 - 1) \geq 4c_1(1 - c_2)(1 - c_3),
    \end{align*}
    That is,
    \begin{align*}
        (c_1 + c_2 + c_3 - 1)^2 \geq 4c_1 c_2 c_3.
    \end{align*}
    Note that
    \begin{align*}
        1 & \geq \mathbb{E}_{v_1, v_2, v_3}[x_1(v_1, v_2, v_3) + x_2(v_1, v_2, v_3) + x_3(v_1, v_2, v_3)] \\
        & = \mathbb{E}_{v_1}[x_1(v_1)] + \mathbb{E}_{v_2}[x_2(v_2)] + \mathbb{E}_{v_3}[x_3(v_3)] \\
        & = c_1 + c_2 + c_3.
    \end{align*}
    Consequently, we have
    \begin{align*}
        1 - c_1 - c_2 - c_3 \geq 2 \sqrt{c_1 c_2 c_3}.
    \end{align*}
    
    For the ``if'' direction: define $\Delta = (c_1 + c_2 + c_3 - 1)^2 - 4c_1 c_2 c_3$ and suppose $\Delta \geq 0$.
    We choose $y_1 = \frac{1 + c_1 - c_2 - c_3 + \sqrt{\Delta}}{2(1 - c_2)}$,
    $y_2 = \frac{1 + c_2 - c_1 - c_3 + \sqrt{\Delta}}{2(1 - c_3)}$,
    $y_3 = \frac{1 + c_3 - c_2 - c_1 + \sqrt{\Delta}}{2(1 - c_1)}$.
    
    Then, we construct an auction with the following allocation rule, which is deterministic DSIC.
    \begin{itemize}
        \item $x_1(v_1, v_2, v_3) = 1$ if and only if $v_2 \leq y_1$ and $v_3 > y_2$.
        \item $x_2(v_1, v_2, v_3) = 1$ if and only if $v_3 \leq y_2$ and $v_1 > y_3$.
        \item $x_3(v_1, v_2, v_3) = 1$ if and only if $v_1 \leq y_3$ and $v_2 > y_1$.
    \end{itemize}
    
    Finally, we have to prove that ``Except for a zero measure of $(b, c) \in [0, 1]^2$, either $x_1(v, b, c) = 0, \forall v$ or $x_1(v, b, c) = 1, \forall v$'':
    
    According to the definition of deterministic DSIC auction, given $(b, c)$, there exist a critical value $cr(b, c)$ such that
    \begin{itemize}
        \item $x_1(v, b, c) = 0$, if $v < cr(b, c)$,
        \item $x_1(v, b, c) = 1$, if $v > cr(b, c)$,
        \item $x_1(v, b, c) = 0$ or $x_1(v, b, c) = 1$, if $v = cr(b, c)$.
    \end{itemize}
    Define $A_1 = \{(b, c)| cr(b, c) = 0 \}$, $A_2 = \{(b, c) | cr(b, c) < 1\}$.
    Clearly $A_1 \subseteq A_2$.
    Note that
    \begin{align*}
        c_1 & = x_1(0) =  E_{b, c}[x_1(0, b, c)] \leq |A_1| \leq |A_2| \\
        & \leq E_{b, c}[x_1(1, b, c)] = x_1(1) = c_1.
    \end{align*}
    Hence $|A_1| = |A_2|$, and since $A_1 \subseteq A_2$, we must have $|A_2 \backslash A_1| = 0$.
    But $A_2 \backslash A_1 = \{(b, c): 0 < cr(b, c) < 1\}$.
    Therefore, except for a zero-measure set of $(b, c)$, we have $cr(b, c) \in \{0, 1\}$, which implies $x_1(v, b, c)$ is constant in $v$ (up to the boundary point $v = 0$ or $v = 1$, which does not affect interim allocations).
    This completes the proof.
\end{proof}

\section{Discussion}\label{conclusion}
We study the implementation power of deterministic single-item auctions from two complementary perspectives: attainable (revenue, welfare) trade-offs and implementable interim allocation rules.
Our results delineate a clear gap between deterministic BIC and deterministic DSIC, while also identifying distributional regimes where deterministic DSIC can match the outcomes of randomized benchmarks.
Two natural directions follow.
First, it remains open how far the current distributional assumptions can be weakened, and whether broader families of priors admit the same outcome equivalence.
Second, our interim-allocation characterization is currently limited to two bidders (and a restricted three-bidder case).
Extending it to general $n$, even under structured restrictions such as constant or low-complexity reduced forms, would substantially sharpen the boundary of deterministic implementability.

\clearpage
\bibliographystyle{named} 
\bibliography{reference}

\newpage
\appendix
\section{OMITTED PROOFS FROM SECTION \ref{sec:pair}}
\subsection{Proof of Theorem \ref{Thm:Pareto_regular}}\label{Proof-Thm:Pareto_regular}
\begin{proof}
    Fix a randomized auction with interim allocation profile $\{\hat{x}_i(q_i)\}_{i=1}^n$ in the quantile space.
    We first transform the allocation rule into a new one by only changing a single buyer's interim allocation.
    In a single process, we keep all buyer's interim allocation rule except $i$ and increase buyer $i$'s interim allocation.
    For any quantile profile $\mathbf{q} = (q_i, \mathbf{q}_{-i}) \in [0, 1]^n$, define function
    \begin{align*}
        h_i(q_i, \mathbf{q}_{-i}) = 1 - \prod_{j} q_j - \sum_{j \neq i} \int^1_{q_j} \hat{x}_j(r_j) \mathrm{d} r_j.
    \end{align*}
    The first negative term $\prod_{j} q_j$ is linear in $q_i$, and the second negative term $\sum_{j \neq i} \int^1_{q_j} \hat{x}_j(r_j) \mathrm{d} r_j$ is independent of $q_i$.
    Thus, for any fixed $\mathbf{q}_{-i}$, the function $h_i(q_i, \mathbf{q}_{-i})$ is linear in $q_i$.
    Now define
    \begin{align*}
        \underline{h}_i(q_i) = \min_{\mathbf{q}_{-i}} h_i(q_i,\mathbf{q}_{-i}).
    \end{align*}
    Since $\underline{h}_i(q_i)$ is the pointwise minimum of linear functions of $q_i$, it is concave in $q_i$.
    It is also decreasing and satisfies
    \begin{align*}
        \underline{h}_i(1) = 0, \qquad \underline{h}_i(0) = 1 - \sum_{j \neq i} \int_0^1 \hat{x}_j(r_j) \mathrm{d}r_j.
    \end{align*}
    The total interim allocation of buyer $i$ is $\int_0^1 \hat{x}_i(r_i) \mathrm{d}r_i$ lies in the interval $[\underline{h}_i(1), \underline{h}_i(0)]$.
    Hence there exists a threshold $s_i^* \in [0, 1]$ such that $\int_0^1 \hat{x}_i(r_i) \mathrm{d}r_i = \underline{h}_i(s_i^*)$.
    Using this $s_i^*$, we define a new interim allocation for buyer $i$:
    \begin{align*}
        \hat{x}_i^*(r_i) &= \begin{cases}
            - \underline{h}_i'(r_i), \quad & r_i \in [s_i^*, 1]; \\
            0, \quad & r_i \in [0, s_i^*),
        \end{cases}
    \end{align*}
    where $\underline{h}_i'$ is the (almost everywhere defined) derivative of the concave function $\underline{h}_i$.
    Because $\underline{h}_i$ is concave, $\underline{h}_i'$ is nonincreasing, so $-\underline{h}_i'$ is nondecreasing.
    Since $\underline{h}_i$ is a concave envelope of a function with slopes in $[-1, 0]$, its derivative (where it exists) must also lie in $[-1, 0]$, and hence $- \underline{h}'_i \in [0, 1]$.
    Therefore $\hat{x}_i^*(r_i)$ is monotone in $r_i$.
    Moreover,
    \begin{align*}
        \int_0^1 \hat{x}_i^*(r_i) \mathrm{d}r_i &= \int_{s_i^*}^1 -\underline{h}_i'(r_i) \mathrm{d}r_i = \underline{h}_i(s_i^*) - \underline{h}_i(1) = \underline{h}_i(s_i^*) \\
        &= \int_0^1 \hat{x}_i(r_i)\mathrm{d}r_i,
    \end{align*}
    so buyer $i$'s total interim allocation is preserved.
    By the definition of $\underline{h}_i$, for any threshold profile $(s_i, \mathbf{s}_{-i})$, we have
    \begin{align*}
        \int^1_{s_i} \hat{x}_i^*(r_i) \mathrm{d} r_i + \sum_{j \neq i} \int^1_{s_j} \hat{x}_j(r_j) \mathrm{d} r_j \leq 1 - \prod_{j}s_j,
    \end{align*}
    so the Border feasibility constraints continue to hold when we replace $\hat{x}_i$ by $\hat{x}_i^*$ and leave all other buyers' interim allocations unchanged.

    Next, we compare welfare and revenue from buyer $i$.
    Let $v_i(q_i)$ denote the value of buyer $i$ at quantile $q_i$.
    The expected welfare contribution of buyer $i$ can be written as
    \begin{align*}
            \int^1_{q_i = 0} v_i(q_i) \hat{x}_i(q_i) \mathrm{d}q_i = \int^1_{q_i = 0} \left(\int_{t = q_i}^1 \hat{x}_i(t) \mathrm{d}t \right) \mathrm{d}v_i(q_i).
    \end{align*}
    Because $\hat{x}_i^*$ is obtained from a transformation that shifts buyer $i$'s allocation probability toward higher quantiles (and thus higher values) while preserving the total allocation mass, and because $v_i(q_i)$ is increasing in $q_i$, we have
    \begin{align*}
        &\int^1_{q_i = 0} \left(\int_{t = q_i}^1 \hat{x}_i(t) \mathrm{d}t \right) \mathrm{d}v_i(q_i) \\
        \leq &\int^1_{q_i = 0} \left(\int_{t = q_i}^1 \hat{x}^*_i(t) \mathrm{d}t \right) \mathrm{d}v_i(q_i) \\
        = &\int^1_{q_i = 0} v_i(q_i) \hat{x}_i^*(q_i) \mathrm{d}q_i.
    \end{align*}
    Thus buyer $i$'s contribution to welfare weakly increases.

    For revenue, we use the Myersonian representation in terms of the virtual value $\phi_i$.
    Under regularity, $\phi_i$ is increasing in the value and hence increasing in the quantile.
    The same monotone reshaping that improves welfare also weakly increases the expected virtual surplus from buyer $i$, and therefore weakly increases revenue from buyer $i$.

    We apply this transformation sequentially for buyers $i = 1, 2, \cdots, n$: first obtain $\hat{x}_1^*$ and $s_1^*$ from the original profile, then obtain $\hat{x}_2^*$ and $s_2^*$ using $\hat{x}_1^*$ and $\hat{x}_j$ for $j \geq 2$, and so on, until we obtain $\hat{x}_n^*$ and $s_n^*$ from $(\hat{x}_1^*, \cdots, \hat{x}_{n-1}^*, \hat{x}_n)$.
    The resulting interim allocation profile $\hat{\mathbf{x}}^* = (\hat{x}_1^*, \cdots, \hat{x}_n^*)$ is Border-feasible, monotone for each buyer, and yields weakly higher welfare and revenue than the original randomized auction.
    This completes the first step of the proof.
    
    The second step shows that $\hat{\mathbf{x}}^*$ can be implemented by a deterministic DSIC mechanism.
    We have the following lemma.
    \begin{lemma}
        $\hat{\mathbf{x}}^*$ can be implemented in a deterministic DSIC mechanism.
    \end{lemma}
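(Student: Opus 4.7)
The plan is to construct a deterministic DSIC mechanism directly from the envelope geometry of the concave functions $\underline{h}_i$. Since $\underline{h}_i(q_i) = \min_{\boldsymbol{q}_{-i}} h_i(q_i, \boldsymbol{q}_{-i})$ is a pointwise infimum of functions affine in $q_i$, it is concave, and the envelope theorem gives $\hat{x}_i^*(q_i) = -\underline{h}_i'(q_i) = \prod_{j \neq i} q_j^*(q_i)$ for almost every $q_i \in [s_i^*, 1]$, where $\boldsymbol{q}_{-i}^*(q_i) \in \arg\min_{\boldsymbol{q}_{-i}} h_i(q_i, \boldsymbol{q}_{-i})$. This writes $\hat{x}_i^*(q_i)$ as the Lebesgue volume of the orthant $[\boldsymbol{0}, \boldsymbol{q}_{-i}^*(q_i)] \subseteq [0,1]^{n-1}$, which suggests the construction.

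First I would pick a measurable selection of minimizers $\boldsymbol{q}_{-i}^*(\cdot)$ for each $i$, set $\boldsymbol{q}_{-i}^*(q_i) = \boldsymbol{0}$ on $[0, s_i^*)$, and define the ex-post rule: buyer $i$ wins iff $q_j \le q_j^*(q_i)$ for all $j \ne i$, with $W_i$ the corresponding winning region in $[0,1]^n$. The first-order condition for the minimizer, namely $\hat{x}_j(q_j^*(q_i)) = q_i \prod_{k \neq i,j} q_k^*(q_i)$, forces each coordinate $q_j^*(q_i)$ to be nondecreasing in $q_i$; hence for every fixed $\boldsymbol{q}_{-i}$, the set of winning $q_i$'s is an upper interval $[\tau_i(\boldsymbol{q}_{-i}), 1]$, which gives deterministic DSIC via the standard critical-value characterization. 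Interim consistency is immediate, since $\int_{[0,1]^{n-1}} \mathbf{1}[\boldsymbol{q}_{-i} \le \boldsymbol{q}_{-i}^*(q_i)] \, \mathrm{d}\boldsymbol{q}_{-i} = \prod_{j \ne i} q_j^*(q_i) = \hat{x}_i^*(q_i)$.

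The main obstacle is feasibility: the winning regions $W_1, \ldots, W_n$ must be pairwise essentially disjoint so that at most one buyer receives the item. The sequential nature of the transformation in the first step is decisive here, because after the $i$-th update the Border inequality involving $\hat{x}_i^*$ is attained tightly along precisely the supporting-hyperplane family that defines $\underline{h}_i$, so the boundary of $W_i$ is forced to match the previously fixed boundary structure. I would verify disjointness by induction on the processing order: if $W_i \cap W_j$ had positive measure for some $j \ne i$, the two first-order identities at a shared interior profile would yield a pair of opposing threshold inequalities whose composition contradicts the tight form of Border's inequality at that profile. In the two-buyer case this reduction is transparent, since the envelope condition forces $\hat{x}_1^* = \hat{x}_2^{-1}$, giving $\hat{x}_2 \circ \hat{x}_1^* = \mathrm{id}$, which is exactly the deterministic Border identity (\ref{Border_deterministic2}) of Theorem~\ref{Thm:Border_deterministic} and makes $W_1, W_2$ complementary up to a null boundary set; for $n \ge 3$ an analogous non-crossing property follows by iterating this argument along the order in which the buyers were processed. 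Carrying out the measurable selection of minimizers and formalizing this geometric disjointness is the technically heaviest part of the proof.
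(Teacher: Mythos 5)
Your construction takes a genuinely different route from the paper's. The paper works with the set $\mathcal{L}$ of tight Border profiles of the \emph{final} allocation $\hat{\mathbf{x}}^*$, establishes a ``no-crossing'' property making these profiles mutually ordered, assigns each $(i,q_i)$ a scalar priority $c_i(q_i)$, and allocates to the bidder with the highest score. You instead read off a coordinatewise minimizer $\boldsymbol{q}_{-i}^*(q_i)$ from the envelope $\underline{h}_i$ of the \emph{intermediate} profile used at step $i$, and define $W_i$ as the box $\{\boldsymbol{q}_{-i} \le \boldsymbol{q}_{-i}^*(q_i)\}$. The identity $\hat{x}_i^*(q_i) = \prod_{j\neq i} q_j^*(q_i)$, the interim consistency via the orthant's Lebesgue volume, and the monotonicity of each $q_j^*(\cdot)$ (which is most cleanly obtained via Topkis on the supermodular $-h_i$, rather than the informal FOC you cite) are all correct, and the two-bidder case is fully worked out and matches the paper's Theorem~\ref{Thm:Border_deterministic}.

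However, there is a real gap, and it is precisely at the step you flag as ``the technically heaviest part.'' Feasibility requires that the $n$ box regions $W_1,\dots,W_n$ be essentially disjoint, and for $n\ge 3$ you offer only the assertion that ``an analogous non-crossing property follows by iterating this argument along the order in which the buyers were processed.'' This is not a proof. Nothing you have written shows why, at a shared profile $\mathbf{q}\in W_i\cap W_j$, the two box conditions (each involving several coordinates beyond $q_i$ and $q_j$) produce a contradiction with a Border inequality. Worse, there is a mismatch you do not address: your box $W_i$ is built from the minimizer of the \emph{intermediate} $h_i$ (which still uses $\hat{x}_j$ for $j>i$), while the feasibility you must verify is with respect to the \emph{final} profile $\hat{\mathbf{x}}^*$. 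The tightness you invoke — ``the Border inequality involving $\hat{x}_i^*$ is attained tightly along precisely the supporting-hyperplane family that defines $\underline{h}_i$'' — holds for the intermediate profile at step $i$, but later updates to $\hat{x}_{i+1},\dots,\hat{x}_n$ can change which constraints are tight. In the paper's argument the objects analyzed (tight profiles of $\hat{\mathbf{x}}^*$) are by construction consistent across all bidders, and the no-crossing/ordering lemma is proved once for that single collection $\mathcal{L}$. Your approach needs either a proof that the intermediate minimizers coincide with the final tight profiles, or a direct disjointness argument for the box family; as written, neither is supplied.
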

    \begin{proof}
        Fix a buyer $i$ and a quantile $q_i$.
        Under the new allocation rule $\hat{\mathbf{x}}^*$, Border's constraint is still tight for some profiles of the other buyers' quantiles $\mathbf{q}_{-i}$.
        First, we show that all such tight profiles for $(i, q_i)$ are coordinatewise ordered.
        Suppose not, we assume there exist two tight profiles $\mathbf{q}_{-i}^1$ and $\mathbf{q}_{-i}^2$ such that for some buyers $j_1$ and $j_2$, we have $q^1_{j_1} < q^2_{j_1} \quad \text{and} \quad q^1_{j_2} > q^2_{j_2}$.
        Construct a new profile that mixes these two coordinates, that is $q_{j_1} = \frac{q^1_{j_1} + q^2_{j_1}}{2}$ and $q_{j_2} = \frac{q^1_{j_2} + q^2_{j_2}}{2}$, keeping the others fixed.
        Because the Border constraint is tight at $(q_i, \mathbf{q}_{-i}^1)$, feasibility of $\hat{\mathbf{x}}^*$ forces the item to be allocated to buyer $j_1$ at this mixed profile.
        Because the Border constraint is also tight at $(q_i, \mathbf{q}_{-i}^2)$, it simultaneously forces the item to be allocated to buyer $j_2$.
        This contradicts single-item feasibility.
        Hence, for fixed $(i,q_i)$, any two tight profiles $(\mathbf{q}_{-i}^1, \mathbf{q}_{-i}^2)$ must be coordinatewise comparable: either $\mathbf{q}_{-i}^1 \leq \mathbf{q}_{-i}^2$ or $\mathbf{q}_{-i}^2 \leq \mathbf{q}_{-i}^1$.
        Now, for each pair of buyers $i \neq j$ and each $q_i \in [s_i^*, 1]$, define
        \begin{align*}
            \lambda_{i, j}(q_i) = \max\{ & q_j \mid (q_i, \mathbf{q}_{-i}) \text{ makes Border's constraint} \\
            & \text{tight.} \}, q_i \in [s_i^*, 1].
        \end{align*}
        By the ordering above, this maximum is well-defined, and the profile $(q_i, \lambda_{i, 1}(q_i), \lambda_{i, 2}(q_i), \cdots)$ is itself a tight quantile profile.
        Writing $\lambda_i(q_i) = (\lambda_{i, j}(q_i))_{j \neq i}$, these tight profiles induce a set $\mathcal{L}$ of relations $l = (q_1, \cdots, q_n)$ among buyers' types.
        Using the same “no-crossing” argument as before, one shows that the relations in $\mathcal{L}$ are themselves mutually ordered.
        For each relation $l = (q_1, \cdots, q_n) \in \mathcal{L}$, define its total score $c(l) = \sum_i q_i$.
        Then, for each buyer $i$ and each $q_i \in [s_i^*, 1]$, define the individual score
        \begin{align*}
            c_i(q_i) = \max\{c(l) \mid l = (q_1, \cdots, q_n) \in \mathcal{L}\}, q_i \in [s_i^*, 1].
        \end{align*}
        When $q_i \in [0, s_i^*)$, let $c_i(q_i) = 0$.
        Thus, each type $q_i$ is assigned a scalar priority $c_i(q_i)$.
        We now specify a deterministic hierarchical allocation rule in quantile space.
        Given any profile $\mathbf{q} = (q_1, \cdots, q_n)$, we compute scores $c_i(q_i)$ for all buyers.
        Next,
        \begin{itemize}
            \item If $\max_i c_i(q_i) = 0$, the seller keeps the item.
            \item If there is a unique non-zero highest score, the buyer with the highest score wins.
            \item If there is multiple non-zero highest score, the highest buyer with the smallest index wins.
        \end{itemize}
        This rule is deterministic by construction. 
        Moreover, since the scores $c_i(q_i)$ are derived from ordered tight profiles, they are non-decreasing in $q_i$, and the hierarchical selection is monotone in each buyer's own quantile.
        By the standard single-parameter characterization, this monotone deterministic allocation can be supported by appropriate critical-value payments, so it is DSIC.
        Finally, because the scores and relations are built exactly from the tight Border profiles of $\hat{\mathbf{x}}^*$, the interim allocation of this hierarchical mechanism coincides with $\hat{\mathbf{x}}^*$.
        Thus $\hat{\mathbf{x}}^*$ is implementable by a deterministic DSIC mechanism.
    \end{proof}
\end{proof}

\subsection{Proof of Lemma \ref{lem:solutions_of_o2}}
\begin{proof}
    We show that the solutions of optimization problem (\ref{Objective2}) actually take the forms depicted in Figure \ref{fig: Special cases of piecewise auctions}, which represents special cases of piecewise auctions.
    
    \begin{figure}[htbp]
        \centering
        \subfigure[Maximum solution]{
            \centering
            \includegraphics[scale=0.19]{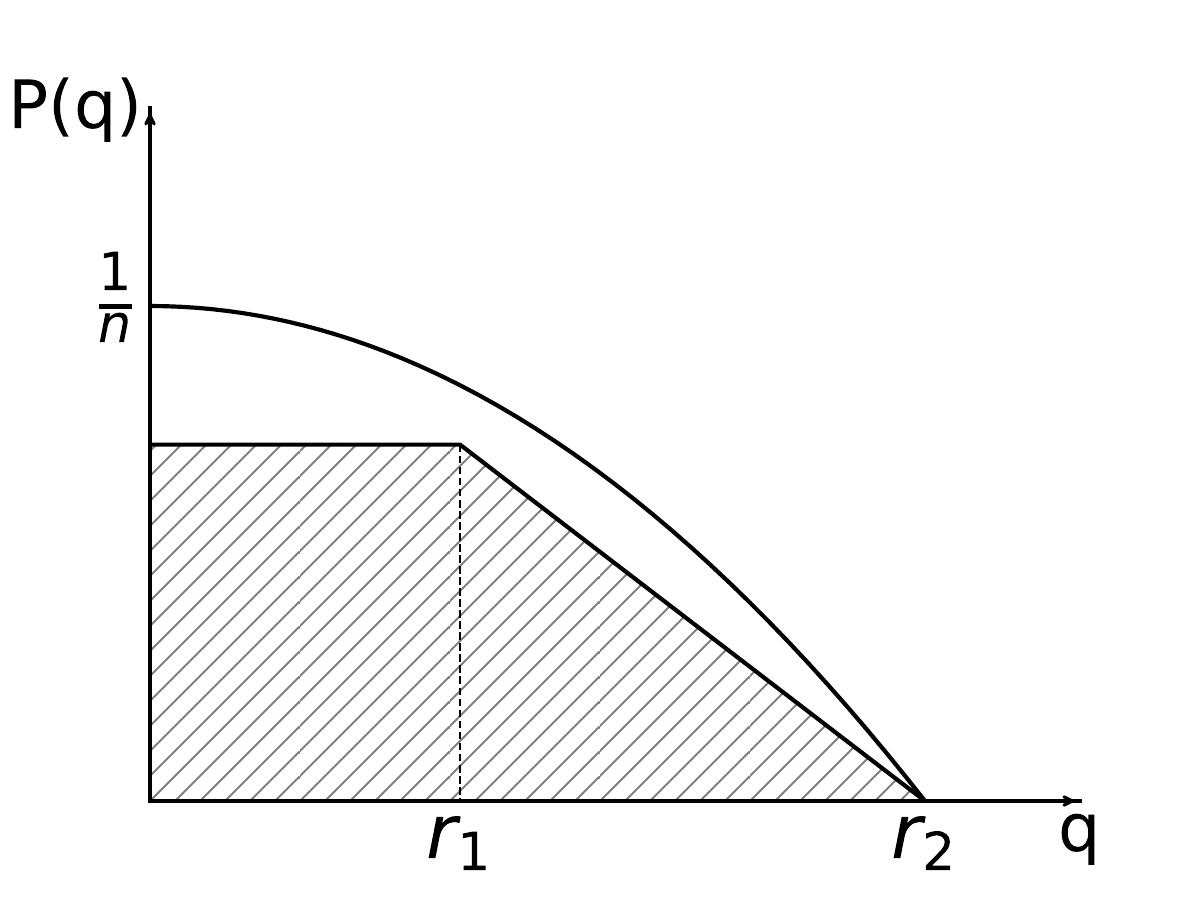}
        }
        \subfigure[Maximum solution]{
            \centering
            \includegraphics[scale=0.19]{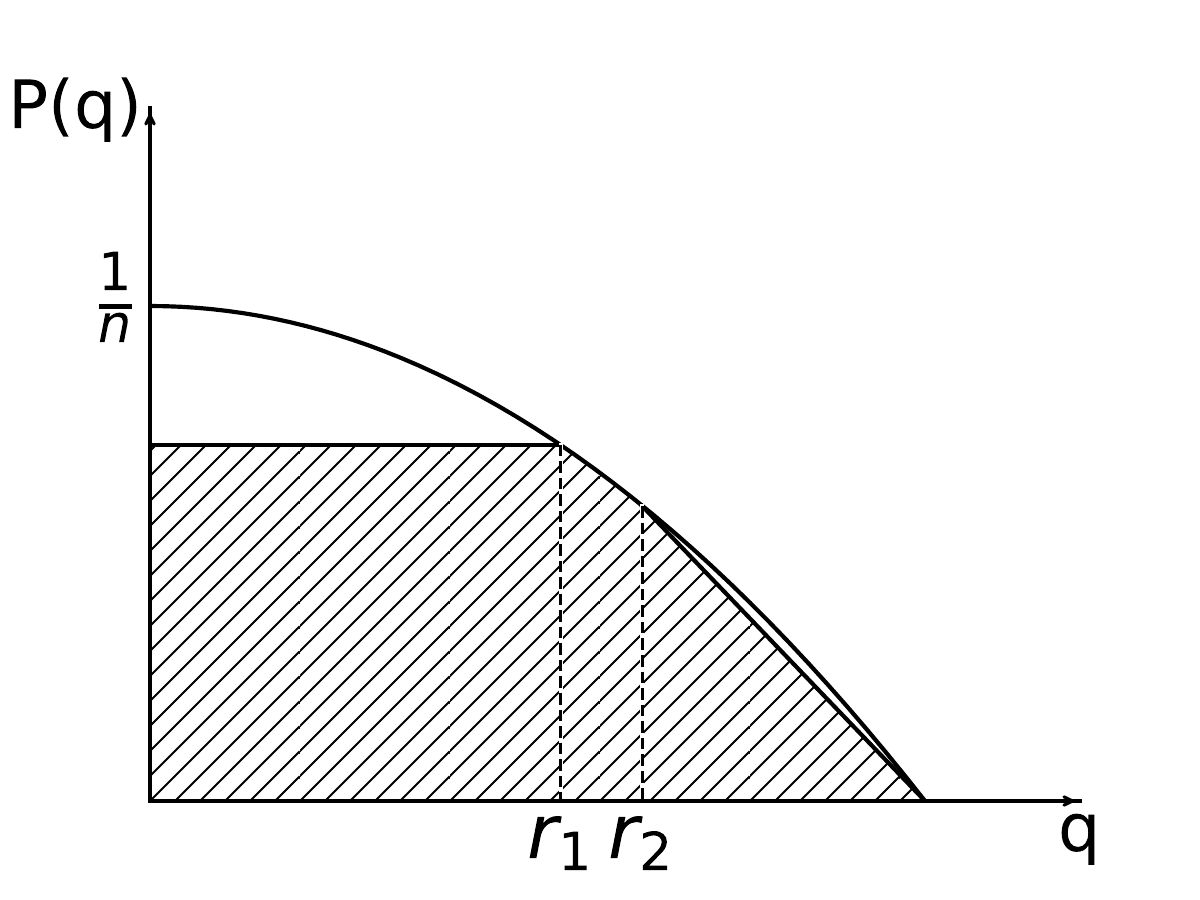}
        }
        \subfigure[Minimum solution]{
            \centering
            \includegraphics[scale=0.19]{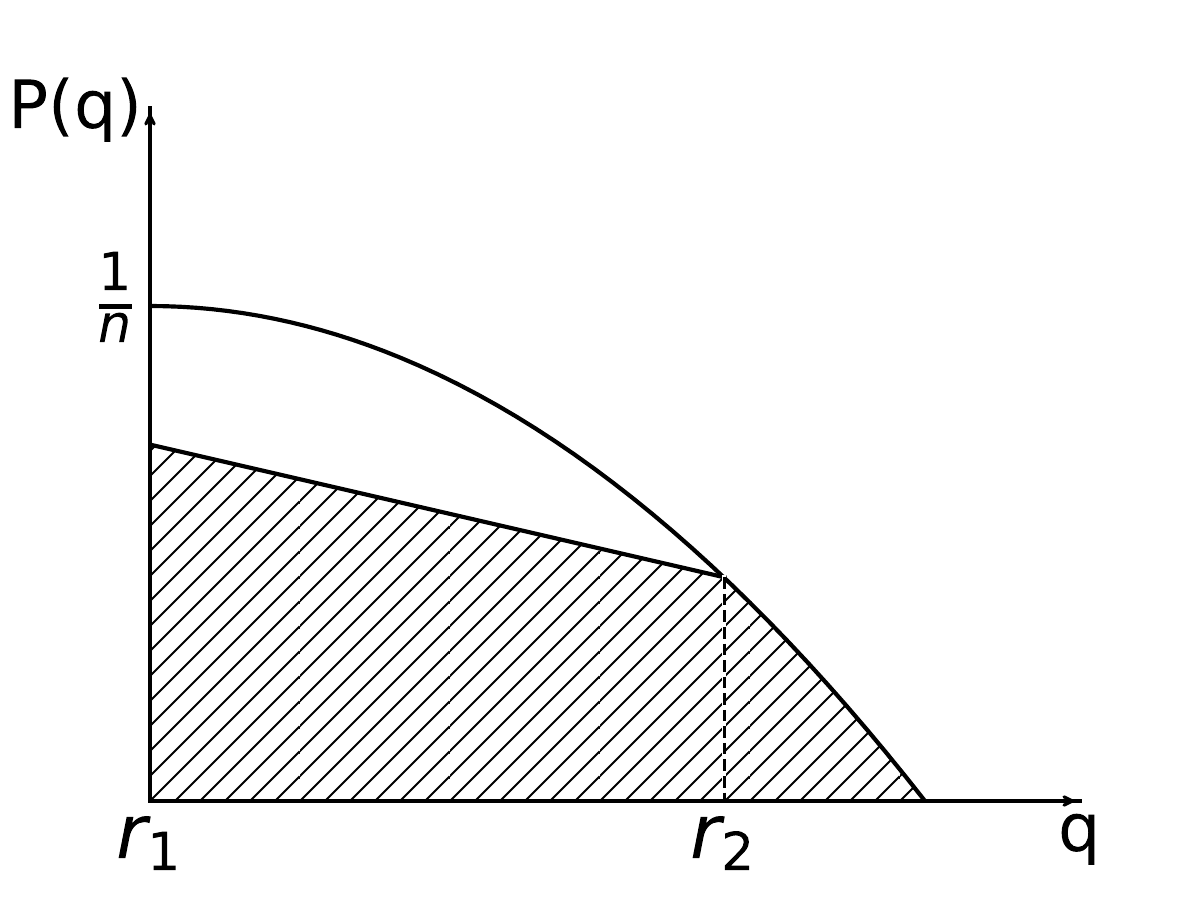}
        }
        \caption{Optimal Solutions of Optimization Problem (\ref{Objective2})}
        \label{fig: Special cases of piecewise auctions}
    \end{figure}
    
    First, we prove that in the maximum solution of optimization problem (\ref{Objective2}), the function $P(q)$ has the form in Figure \ref{fig: Special cases of piecewise auctions}(a) or \ref{fig: Special cases of piecewise auctions}(b), that is, $P(q)$ has the minimum stochastic order (for any feasible function $\hat{P}(q)$, $\int_t^1 P(q) \mathrm{d} q \leq \int_t^1 \hat{P}(q) \mathrm{d} q$, $\forall t \in [0,1]$).

    We fix the value of $P(0)$ and suppose that there exists another feasible solution, denoted by $\hat{P}(q)$, which is different from $P(q)$.
    By concavity, we know that $\hat{P}(q)$ crosses $P(q)$ at exactly one point $Q$, occurring at $q = q_0$, as shown in Figure \ref{fig: piecewise auctions}(a).
    Note that it is possible that the curve $P(q)$ and $\hat{P}(q)$ overlap.

    \begin{figure}[htbp]
        \centering
        \subfigure[Maximum solution]{
            \centering
            \includegraphics[scale=0.2]{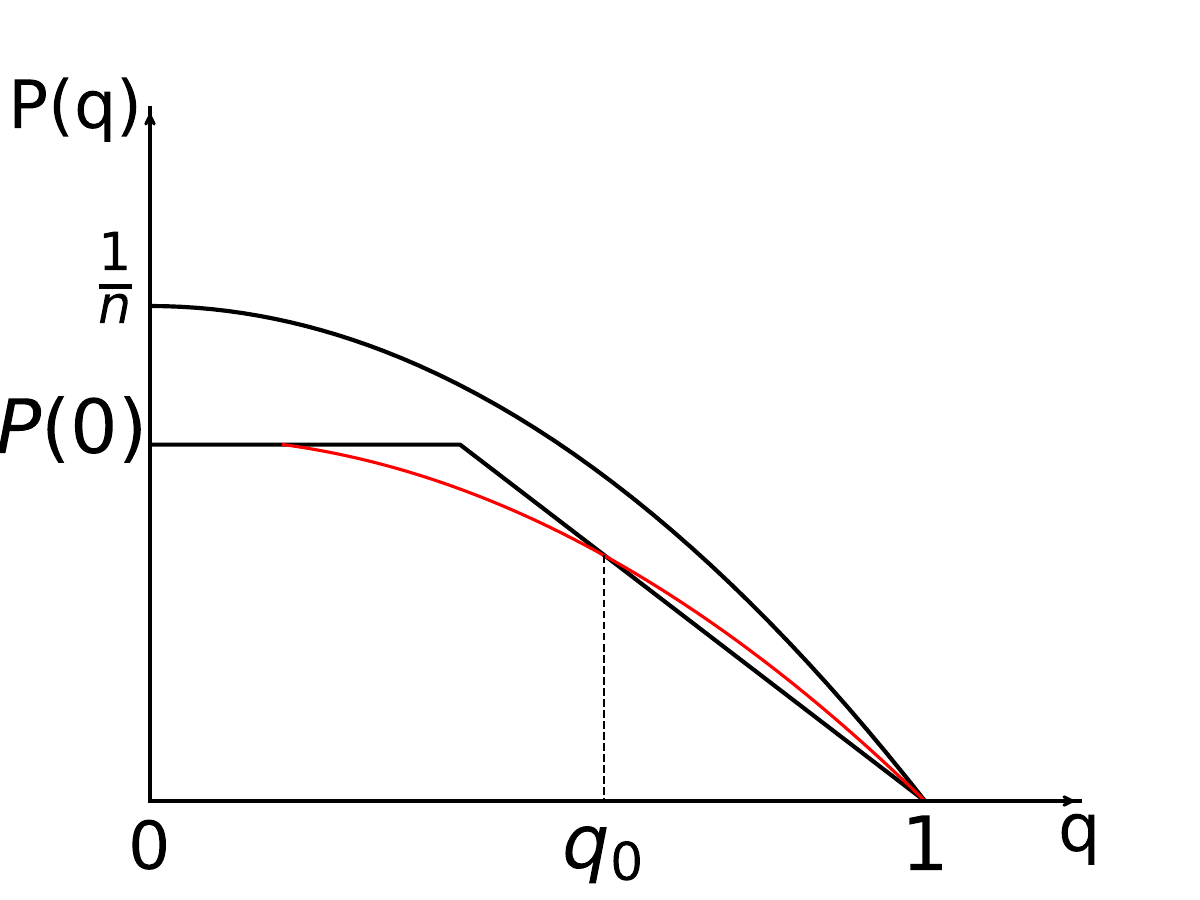}
        }
        \subfigure[Minimum solution]{
            \centering
            \includegraphics[scale=0.2]{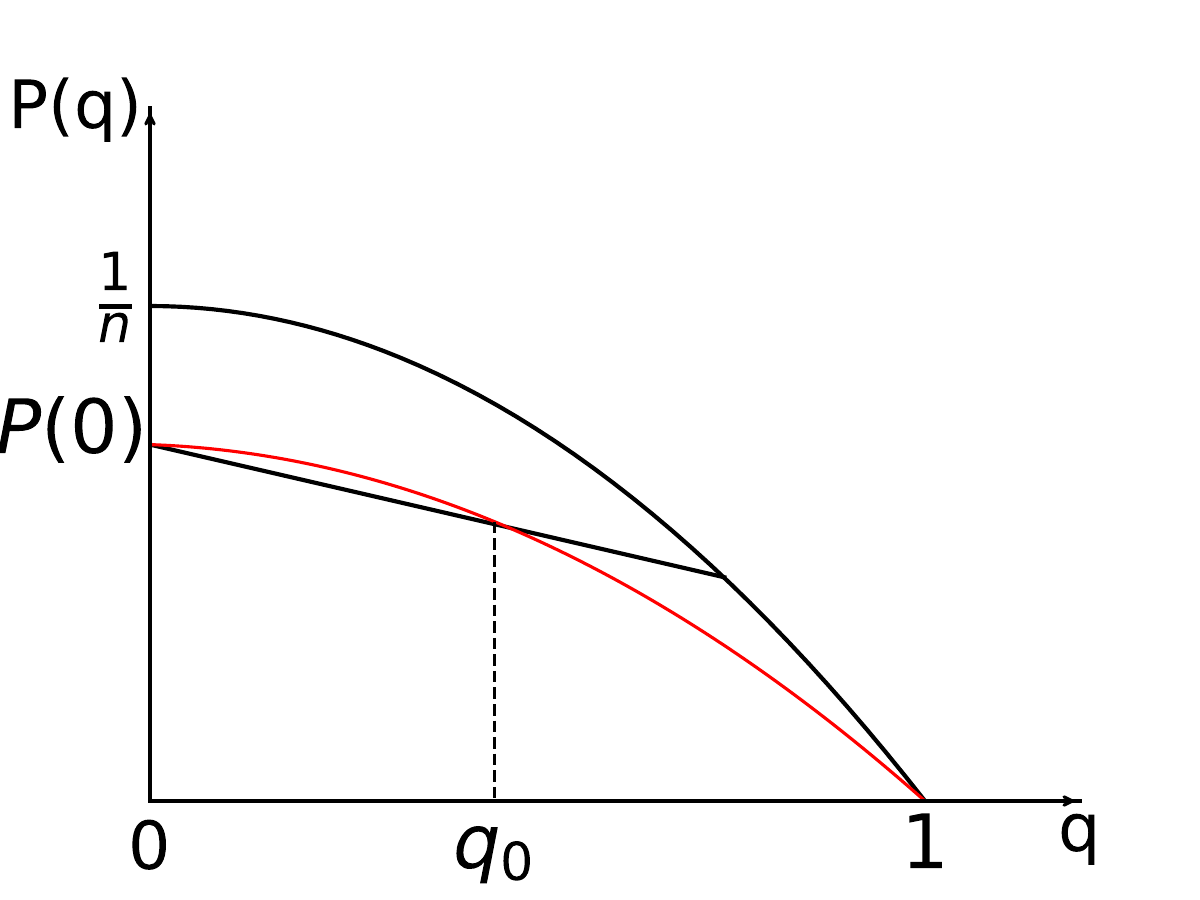}
        }
        \caption{Comparison Between $P(q)$ and $\hat{P}(q)$}
        \label{fig: piecewise auctions}
    \end{figure}
	
    Since $\int_0^1  v'(q)P(q) \mathrm{d} q = \int_0^1 v'(q) \hat{P}(q) \mathrm{d} q = \frac{c}{n}$, we have
    \begin{equation}\label{eq:lemma boundary}
        \int_0^{q_0} v'(q)(P(q) - \hat{P}(q)) \mathrm{d} q = \int_{q_0}^1 v'(q)(\hat{P}(q) - P(q)) \mathrm{d} q.
    \end{equation}

    We know that $P(q) \ge \hat{P}(q), \forall 0 \le q \le q_0$ and $P(q) \le \hat{P}(q), \forall q_0 \le q \le 1$.
    Together with $v''(q) \geq 0$, we have
    \begin{align}
        &\int_0^{q_0} v'(q)(P(q) - \hat{P}(q)) \mathrm{d} q \leq \int_{0}^{q_0} v'(q_0)(P(q) - \hat{P}(q)) \mathrm{d} q, \label{ineq1:lemma boundary} \\
        &\int_{q_0}^1 v'(q)(\hat{P}(q) - P(q)) \mathrm{d} q \geq \int_{q_0}^1 v'(q_0)(\hat{P}(q) - P(q)) \mathrm{d} q. \label{ineq2:lemma boundary}
    \end{align}

    Put equality \eqref{eq:lemma boundary}, and inequalities \eqref{ineq1:lemma boundary} and \eqref{ineq2:lemma boundary} together, we have
    \begin{align*}
        \int_0^{q_0} (P(q) - \hat{P}(q)) \mathrm{d} q \geq \int_{q_0}^1 (\hat{P}(q) - P(q)) \mathrm{d} q.
    \end{align*}

    Similarly, due to $\hat{\phi}''(q) \le 0$, the difference of revenue between the solution with $P(q)$ and $\hat{P}(q)$ is
    \begin{align*}
        & n \int_0^{q_0} \hat{\phi}'(q)(P(q) - \hat{P}(q)) \mathrm{d} q - n \int_{q_0}^1 \hat{\phi}'(q)(\hat{P}(q) - P(q)) \mathrm{d} q \\
        \geq ~ & n \hat{\phi}'(q_0) \cdot \left(\int_0^{q_0} (P(q) - \hat{P}(q)) \mathrm{d} q - \int_{q_0}^1 (\hat{P}(q) - P(q)) \mathrm{d} q \right) \\
        \geq ~ & 0.
    \end{align*}

    Finally, we prove that $\hat{\phi}'(q_0) \ge 0$.
    If $\hat{\phi}'(q_0) < 0$, then $\hat{\phi}'(q) < 0$ for all $q > q_0$ according to $\hat{\phi}''(q) \le 0$.
    Thus, $\hat{\phi}(q_0) > \hat{\phi}(1) = 1$, which leads to a contradiction.
    So, we have $\hat{\phi}'(q_0) \geq 0 $.
    Therefore, we solve the maximum solution.

    For the minimum solution, we prove that the function $P(q)$ has the form in Figure \ref{fig: Special cases of piecewise auctions}(c), that is $P(q)$ has the maximum stochastic order.
    The proof is similar and omitted.
\end{proof}

\subsection{Proof of Lemma \ref{Lem:sum_equal}}
\begin{proof}
    The revenue and the welfare for the auction with interim allocation profile $(x_1(v), \cdots, x_n(v))$ are
    \begin{align*}
        \textrm{WEL} & = \sum_{i = 1}^n \int_0^1 v x_i(v) f(v) \mathrm{d} v = \int_0^1 v f(v) \sum_{i = 1}^n x_i(v) \mathrm{d} v, \\
        \textrm{REV} & = \sum_{i = 1}^n \int_0^1 \phi(v) x_i(v) f(v) \mathrm{d} v = \int_0^1 \phi(v) f(v) \sum_{i = 1}^n x_i(v) \mathrm{d} v.
    \end{align*}
    Because welfare and revenue are only related to the interim allocation profile, $\sum_{i = 1}^n x_i(v) = \sum_{i = 1}^n y_i(v)$, thus two auctions' revenue and welfare are the same.
\end{proof}

\subsection{Proof of Lemma \ref{lem:implementation_of_simple_auctions}}
\begin{proof}
    For piecewise auctions satisfying the first conditions in Definition \ref{def:piecewise auctions}, the corresponding interim allocations are as follows:
    \begin{itemize}
        \item $\hat{x}(q) = 0$, for $0 < q \leq r_1$,
        \item $\hat{x}(q) = k$, for $r_1 < q \leq r_2$,
        \item $\hat{x}(q) = q^{n-1}$, for $r_2 < q \leq 1$.
    \end{itemize}
    Define $A = \{i | q_i > r_2\}$, $B = \{i | r_1 \leq q_i \leq r_2\}$ and construct the deterministic allocation rule as follows: 
    \begin{itemize}
        \item If $|A| \neq 0$, set $\hat{x}_j(q_1, q_2, \cdots, q_n) = 1$, where $j$ is the minimum element in $A$ such that $q_j = \max_{i \in A} \{ q_i \}$, and set $\hat{x}_i(q_1, q_2, \cdots, q_n) = 0 , \forall i \neq j$.
        \item If $|A| = 0$, $|B| > 0$ and $\frac{r_2^{n-1}} n \leq k \leq \frac{r_2^n - r_1^n}{n(r_2 - r_1)}$:
            \begin{itemize}
                \item If $1 \in B$, then set $\hat{x}_1(q_1, q_2, \cdots, q_n) = 1$, set $\hat{x}_i(q_1, q_2, \cdots, q_n) = 0, \forall i > 1$.
                \item If $1 \notin B$, suppose that $j$ is the minimum element in $B$,
                    \begin{itemize}
                        \item If $q_1 \leq \lambda r_1$, set $\hat{x}_j(q_1, q_2, \cdots, q_n) = 1$, $\hat{x}_i(q_1, q_2, \cdots, q_n) = 0$, $\forall i \neq j$.
                        \item If $\lambda r_1 < q_1 < r_1$, set $\hat{x}_i(q_1, q_2, \cdots, q_n) = 0, \forall i$. ($\lambda$ ranges from 0 to $1$.)
                    \end{itemize}
            \end{itemize}
        \item If $|A| = 0$, $|B| > 0$ and $ k < \frac{r_2^{n-1}}{n}$:
            \begin{itemize}
                \item If $1 \in B$, 
                    \begin{itemize}
                        \item If $q_2 \leq \lambda r_2$, set $\hat{x}_1(q_1, q_2, \cdots, q_n) = 1, \hat{x}_i(q_1, q_2, \cdots, q_n) = 0, \forall i > 1$.
                        \item If $\lambda r_2 < q_2 \leq r_2$, set $\hat{x}_i(q_1, q_2, \cdots, q_n) = 0, \forall i$. ($\lambda$ ranges from 0 to $1$.)
                    \end{itemize}
                \item If $1 \notin B$, set $\hat{x}_i(q_1, q_2, \cdots, q_n) = 0, \forall i$.
            \end{itemize}
        \item If $|A| = 0$ and $|B| = 0$, set $\hat{x}_i(q_1, q_2, \cdots, q_n) = 0, \forall i$.
    \end{itemize}
    The payment rule is according to the payment identity. 

    It is obvious that $\hat{x}_i(q) = 0, q < r_1$ and $\hat{x}_i(q) = q^{n-1}, q > r_2$.
    We then analyze the case $r_1 \leq q \leq r_2$.

    If $\frac{r_2^{n-1}}{n} \leq k \leq \frac{r_2^n - r_1^n}{n(r_2 - r_1)}$, we have $\hat{x}_1(q) = Pr[q_j \leq r_2, \forall j \neq 1] = r_2^{n-1}$.
    And $\forall i > 1$,
    \begin{align*}
        \hat{x}_i(q) = & Pr[q_j < r_1, \forall j < i] \cdot Pr[q_j < r_2, \forall j > i] \\
        & \cdot Pr[q_1 < \lambda r_1 | q_1 < r_1] \\
        = & \lambda r_1^{i-1} r_2^{n-i}.
    \end{align*}
    Thus, we have
    \begin{align*}
        \sum_{i = 1}^n \hat{x}_i(q) = r_2^{n - 1} + \sum_{i = 2}^n \lambda r_1^{i - 1} r_2^{n - i} \in [r_2^{n-1}, \frac{r_2^n - r_1^n}{r_2 - r_1}].
    \end{align*}
    Therefore, there exists $\lambda \in [0, 1]$ such that $\sum_{i = 1}^n \hat{x}_i(q) = nk$.

    If $ k < r_2^{n-1}$, $\hat{x}_i(q) = 0, \forall i > 1$.
    And we have
    \begin{align*}
        \hat{x}_1(q) &= Pr[q_j \leq r_2, \forall j \neq i] \cdot Pr[q_2 \leq \lambda r_2 | q_2 \leq r_2] \\
        &= \lambda r_2^{n-1} \in [0, r_2^{n-1}].
    \end{align*}
    Thus, there exists $\lambda \in [0, 1]$ such that $\sum_{i = 1}^n \hat{x}_i(q) = nk$.

    For piecewise auctions satisfying the second conditions in Definition \ref{def:piecewise auctions}, the corresponding interim allocations are as follows:
    \begin{itemize}
        \item $\hat{x}(q) = 0$, for $0 < q \leq r_1$,
        \item $\hat{x}(q) = q^{n-1}$, for $r_1 < q \leq r_2$,
        \item $\hat{x}(q) = \frac{1 - r_2^n}{n(1 - r_2)}$, for $r_2 < q \leq 1$.
    \end{itemize}

    Define $A = \{i | q_i > r_2\}$, $B = \{i | r_1 \leq q_i \leq r_2\}$ and construct the deterministic allocation rule as follows:
    \begin{itemize}
        \item If $|A| > 0$, set $\hat{x}_j(q_1, q_2, \cdots, q_n) = 1$ where $j$ is the minimum element in $A$, and set $\hat{x}_i(q_1, q_2, \cdots, q_n) = 0, \forall i \neq j$. 
        \item If $|A| = 0$ and $|B| > 0$, set $\hat{x}_j(q_1, q_2, \cdots, q_n) = 1$ where $j$ is the minimum element in $B$ such that $q_j = \max_{i \in B}\{q_i\}$, and set $\hat{x}_i(q_1, q_2, \cdots, q_n) = 0, \forall i \neq j$. 
        \item If $|A| = 0$ and $|B| = 0$, set $\hat{x}_i(q_1, q_2, \cdots, q_n) = 0, \forall i$.
    \end{itemize}
    The payment rule is according to the payment identity.

    For all $q > r_2$, we have $\hat{x}_i(q) = Pr[q_j < r_2, \forall j < i] = r_2^{i - 1}$, thus $\sum_{i = 1}^n \hat{x}_i(q) = \sum_{i = 1}^n r_2^{i - 1} = \frac{1 - r_2^n}{1 - r_2}$.
    
    For all $r_1 \leq q \leq r_2$, $\hat{x}_i(q) = Pr[q_j < q, \forall j \neq i] = q^{n - 1}$ , thus $\sum_{i = 1}^n \hat{x}_i(q) = nq^{n-1}$.
    
    For all $q < r_1$,  $\hat{x}_i(q) = 0$.
    
    So we have $\sum_{i = 1}^n \hat{x}_i(q) = n\hat{x}(q)$.
    By Lemma \ref{Lem:sum_equal}, the (revenue, welfare) pairs of piecewise auctions is implementable by deterministic DSIC auctions.
\end{proof}

\subsection{Proof of Lemma \ref{lem:inner-pairs}}
\begin{proof}
    Fix a welfare level $c$.
    Let $\mathcal{M}^a$ and $\mathcal{M}^b$ be two piecewise auctions (Definition \ref{def:piecewise auctions}) with $\mathrm{WEL}(\mathcal{M}^a) = \mathrm{WEL}(\mathcal{M}^b) = c$.
    We construct a continuous transfer (via a sequence of piecewise auctions, all having welfare $c$) from each auction to a common canonical piecewise auction.
    Throughout, write the interim allocation as $\hat{x}(q)$, $q \in [0, 1]$, so
    \begin{align*}
        \mathrm{WEL}(\hat{x}) = n \int_0^1 v(q) \hat{x}(q) \mathrm{d}q,
    \end{align*}
    where $v(q) = F^{-1}(q)$.
    
    Define the ``reserve-type'' interim allocation
    \begin{align*}
        \hat{x}^{\rho}(q) := \mathbb{I}\{q > \rho\} q^{n-1}.
    \end{align*}
    Its welfare is
    \begin{align*}
        W(\rho) := n \int_{\rho}^{1} v(q)q^{n-1} \mathrm{d}q,
    \end{align*}
    which is continuous and (weakly) decreasing in $\rho$.
    Hence there exists $\rho^\star \in [0, 1]$ such that $W(\rho^\star) = c$.
    Let $\mathcal{M}^\star$ denote the piecewise auction implementing $\hat{x}^{\rho^\star}$.
    Note $\hat{x}^{\rho^\star}$ is exactly a Condition 2 piecewise auction in Definition \ref{def:piecewise auctions} with parameters $(r_1, r_2) = (\rho^\star, 1)$.
    It suffices to show: any piecewise auction $\mathcal{M}$ with welfare $c$ can be continuously transferred to $\mathcal{M}^\star$ through piecewise auctions of welfare $c$.
    
    First, we discuss Condition 2 in Definition \ref{def:piecewise auctions}.
    A Condition 2 piecewise auction is characterized by parameters $0 \le r_1 < r_2 \le 1$ and interim allocation
    \begin{align*}
        \hat{x}(q) = \begin{cases}
            0, & q \le r_1; \\
            q^{n-1}, & r_1 < q \le r_2; \\
            \kappa(r_2) := \dfrac{1 - r_2^n}{n(1 - r_2)}, & r_2 < q \le 1.
        \end{cases}
    \end{align*}
    Define $r_2(t) := (1 - t) r_2 + t \text{ for } t \in [0, 1]$.
    For each $t$, choose $r_1(t) \in [0, r_2(t))$ so that the resulting Condition 2 piecewise auction $\mathcal{M}_t$ satisfies
    $\mathrm{WEL}(\mathcal{M}_t) = c$.
    This is well-defined and continuous in $t$ because, for fixed $r_2(t)$, welfare is a continuous and strictly decreasing function of $r_1$ (increasing $r_1$ removes positive allocation $q^{n-1}$ on $(r_1, r_2(t)])$.
    Thus $t \mapsto r_1(t)$ is continuous, $\mathcal{M}_t$ stays within Condition 2, and $\mathcal{M}_1$ has $(r_1(1), r_2(1)) = (\rho^\star, 1)$, i.e. $\mathcal{M}_1 = \mathcal{M}^\star$.
    
    Second, we discuss Condition 1 in Definition \ref{def:piecewise auctions}.
    A Condition 1 piecewise auction has parameters $0 \le r_1 < r_2 \le 1$ and constant $k$ with
    \begin{align*}
        0 \le k \le k_{\max}(r_1, r_2) := \frac{r_2^n - r_1^n}{n (r_2 - r_1)},
    \end{align*}
    and interim allocation
    \begin{align*}
        \hat{x}(q) = \begin{cases}
            0, & q \le r_1, \\
            k, & r_1 < q \le r_2, \\
            q^{n-1}, & r_2 < q \le 1.
        \end{cases}
    \end{align*}
    Let $\mathcal{M}$ be such an auction with welfare $c$.
    Define
    \begin{align*}
        T(r) := n \int_{r}^{1} v(q) q^{n-1} \mathrm{d}q, \quad (\text{tail welfare}).
    \end{align*}
    Since $\mathrm{WEL}(\mathcal{M}) = T(r_2) + n k \int_{r_1}^{r_2} v(q) \mathrm{d}q = c$ with $k \ge 0$, we have $T(r_2) \le c = T(\rho^\star)$, hence $r_2 \ge \rho^\star$.
    Now set $r_2(t) := (1 - t)r_2 + t \rho^\star$, so $r_2(t) \in [\rho^\star, r_2]$.
    Keep $r_1$ fixed.
    For each $t$, define $k(t)$ as the unique value making welfare equal to $c$, we have
    \begin{align*}
        k(t) := \frac{c - T(r_2(t))}{n \int_{r_1}^{r_2(t)} v(q) \mathrm{d}q}.
    \end{align*}
    This is well-defined because $T(r_2(t)) \le T(\rho^\star) = c$ and $\int_{r_1}^{r_2(t)} v(q) \mathrm{d}q > 0$.
    Moreover, feasibility $k(t) \le k_{\max}(r_1, r_2(t))$ holds because, for fixed $(r_1, r_2(t))$, welfare is affine increasing in $k$ on the feasible interval $[0, k_{\max}]$.
    Since $k(0) = k$ is feasible and $t \mapsto (r_2(t), k(t))$ is continuous, the equality $\mathrm{WEL} = c$ selects a feasible $k(t) \in [0, k_{\max}(r_1, r_2(t))]$ for all $t$.
    Thus each $\mathcal{M}_t$ defined by $(r_1, r_2(t), k(t))$ is a Condition 1 piecewise auction with welfare $c$, and $t \mapsto \mathcal{M}_t$ is continuous.
    At $t = 1$, $r_2(1) = \rho^\star$ and $T(\rho^\star) = c$, hence $k(1) = 0$.
    Therefore
    \begin{align*}
        \hat{x}_{M_1}(q) = \mathbb{I}\{q > \rho^\star\} q^{n-1} = \hat{x}^{\rho^\star},
    \end{align*}
    so $\mathcal{M}_1 = \mathcal{M}^\star$.
    
    Apply the above transfer process, we can transfer $\mathcal{M}^a$ continuously to $\mathcal{M}^\star$ through piecewise auctions with welfare $c$, and similarly transfer $\mathcal{M}^b$ to $\mathcal{M}^\star$.
    Concatenating the first transfer with the reverse of the second yields a sequence of piecewise auctions that continuously transfers $\mathcal{M}^a$ to $\mathcal{M}^b$ while keeping welfare fixed.
\end{proof}

\subsection{The Proof of Theorem \ref{thm:equal_to_random}}\label{Proof:equal_to_random}
\begin{proof}
    Fix a welfare level $c$.
    Consider the optimization problem \ref{Objective2} that maximizes (resp. minimizes) revenue over all feasible symmetric randomized auctions subject to $\mathrm{WEL} = c$.
    Let $R_{\max}(c)$ and $R_{\min}(c)$ denote the optimal values.
    
    By Lemma \ref{lem:solutions_of_o2}, there exist optimal solutions to these two problems whose induced auctions are piecewise.
    Hence there are two piecewise auctions $\mathcal{M}^{\max}$ and $\mathcal{M}^{\min}$ such that
    \begin{align*}
        \mathrm{WEL}(\mathcal{M}^{\max}) &= \mathrm{WEL}(\mathcal{M}^{\min}) = c, \\
        \mathrm{REV}(\mathcal{M}^{\max}) &= R_{\max}(c), \quad \mathrm{REV}(\mathcal{M}^{\min}) = R_{\min}(c).
    \end{align*}
    
    By Lemma \ref{lem:implementation_of_simple_auctions}, each piecewise auction can be implemented by a deterministic DSIC mechanism with the same total interim allocation.
    By Lemma \ref{Lem:sum_equal}, matching total interim allocation implies matching expected welfare and expected revenue.
    Therefore, the two boundary points $(R_{\max}(c), c)$ and $(R_{\min}(c), c)$ are both achievable by deterministic DSIC auctions.
    Next, Lemma \ref{lem:inner-pairs} implies that $\mathcal{M}^{\max}$ can be continuously transferred to $\mathcal{M}^{\min}$ through a sequence of piecewise auctions while keeping welfare fixed at $c$.
    Along this transfer, revenue varies continuously, so every intermediate revenue level $r \in [R_{\min}(c), R_{\max}(c)]$ is achieved by some piecewise auction $\mathcal{M}$ with $\mathrm{WEL}(\mathcal{M}) = c$ and $\mathrm{REV}(\mathcal{M}) = r$.
    Applying Lemmas \ref{lem:implementation_of_simple_auctions} and \ref{Lem:sum_equal} again, each such point $(r, c)$ is implementable by a deterministic DSIC auction.
    Finally, any randomized auction with welfare $c$ must have revenue in $[R_{\min}(c), R_{\max}(c)]$ by definition of $R_{\min}(c)$ and $R_{\max}(c)$.
    Hence every $(\mathrm{REV}, \mathrm{WEL})$ pair attainable by randomized auctions at welfare level $c$ is attainable by deterministic DSIC auctions.
    Since $c$ was arbitrary, the entire $(\mathrm{REV}, \mathrm{WEL})$ region attainable by randomized auctions is attainable by deterministic DSIC auctions, proving Theorem \ref{thm:equal_to_random}.
\end{proof}

\section{OMITTED PROOFS FROM SECTION \ref{sec:interim}}
\subsection{Proof of Theorem \ref{Thm:Border_deterministic}}\label{Proof-Border_deterministic}
\begin{proof}
    Consider the plane
    \begin{align*}
        A = \{(q_1, q_2), 0 \leq q_1 \leq 1, 0 \leq q_2 \leq 1\}.
    \end{align*}
    We first prove the ``only if'' direction: for any deterministic DSIC auction with interim allocation profiles $(\hat{x}_1(q), \hat{x}_2(q))$, condition \eqref{Border_deterministic} holds.
    We construct a corresponding three-coloring $C$ of plane $A$: $C(q_1, q_2) \in \{0, 1, 2\}$, which means that the item is allocated to no one, buyer 1 or buyer 2.
    
    By \cite{nisan2007algorithmic}, we know that if a DSIC auction is deterministic, then for any $i$, there exists a value $c_i(\mathbf{v}_{-i})$ that is independent of $v_i$, called the critical value, such that
    \begin{itemize}
        \item If $v_i \geq c_i(\mathbf{v}_{-i})$, then $x_i(v_i, \mathbf{v}_{-i}) = 1$, $p_i(v_i, \mathbf{v}_{-i}) = c_i(\mathbf{v}_{-i})$.
        \item If $v_i < c_i(\mathbf{v}_{-i})$, then $x_i(v_i, \mathbf{v}_{-i}) = 0$, $p_i(v_i, \mathbf{v}_{-i}) = 0$.
    \end{itemize}
    
    Thus, for each quantile $q_2$ of buyer 2, there is a function $c_1(q_2)$ which only depends on $q_2$, called the critical value for buyer 1, i.e, $\hat{x}_1(q, q_2) = 1$ if and only if $q \geq c_1(q_2)$.
    
    Next, we rearrange the coloring $C$ to $C'$ with fixed interim allocations (see Figure \ref{fig: deter_border_arrang}).
    \begin{itemize}
        \item $C'(q_1, q_2) = 1$ for all $q_2 \leq \hat{x}_1(q_1)$,
        \item $C'(q_1, q_2) = 0$ for all $q_2 > \hat{x}_1(q_1)$ and $C(q_1, q_2) = 1$.
        \item $C'(q_1, q_2) = 2$ for all $q_2 > \hat{x}_1(q_1)$ and $C(q_1, q_2) = 2$.
    \end{itemize}
    In fact, the rearrangement moves down all the area of buyer 1.
    Next, we show that the interim allocations keep unchanged.
    For buyer 1 with any quantile $q_1$, the new interim allocation is $\int_{q_2} \mathbf{I}[C'(q_1, q_2) = 1] \mathrm{d} q_2 = \hat{x}_1(q_1)$, where $\mathbf{I}$ is the indicator function.
    So the interim allocations remain the same for buyer 1.
    
    For buyer 2, we prove that all areas that belong to buyer 2 remain the same.
    Otherwise, suppose that there is a point $(q_1, q_2)$ such that $C(q_1, q_2) = 2$ and $C'(q_1, q_2) \neq 2$.
    Then by the construction, it is easy to see that $q_2 < \hat{x}_1(q_1)$.
    By DSIC, we have $C(q_1, q) = 2$ for all $q > q_2$.
    Therefore, we have $\hat{x}_1(q_1) \leq 1 - \int_{q_2}^1 \mathbf{I}[C(q_1, q) = 2] \mathrm{d} q_2 = q_2 < \hat{x}_1(q_1)$, contradiction.
    Here we ignore the case where $q_2 = \hat{x}_1(q_1)$ is the minimum value of $q_2$ such that $C(q_1, q_2) = 2$, of which the probability is measured zero.
    
    \begin{figure}[htbp]
        \centering
        \includegraphics[scale = 0.18]{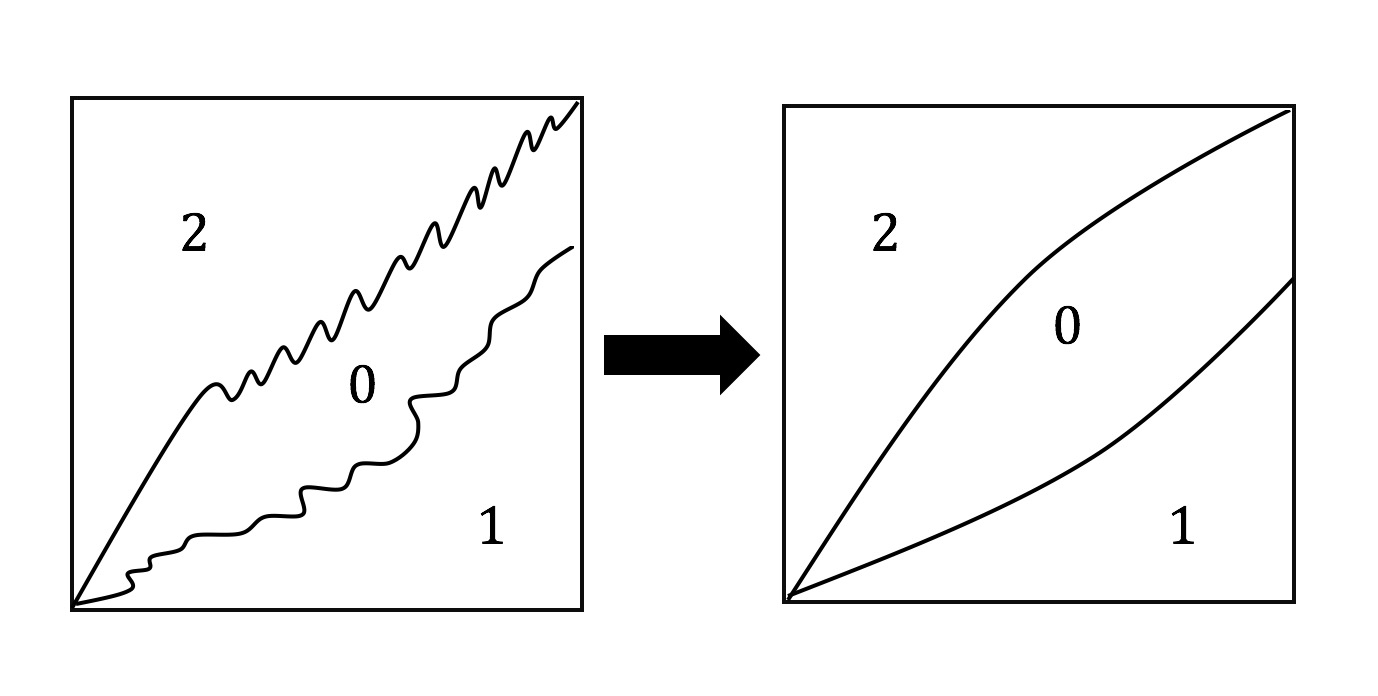}
        \caption{Rearrangment on three-coloring of deterministic auctions}
        \label{fig: deter_border_arrang}
    \end{figure}
    
    Do the same rearrangement for buyer 2, as shown in Figure \ref{fig: deter_border_arrang}, the area belongs to buyer 1 and buyer 2 are partitioned by two separate monotone curves, $q_2 = \hat{x}_1(q_1)$ and $q_1 = \hat{x}_2(q_2)$.
    Since the two curves does not cross each other, it is possible that the two curves overlap.
    The results still hold as the overlapped area is measured zero, we have $\forall q, ~ \hat{x}_2(\hat{x}_1(q)) \leq q$.
    
    If the item is always sold, then no point in plane $A$ is colored 0.
    So the two curves $q_2 = \hat{x}_1(q_1)$ and $q_1 = \hat{x}_2(q_2)$ totally overlap (the colored 0 area does not exists).
    It follows that $\forall q, ~\hat{x}_2(\hat{x}_1(q)) = q$.
    
    For the ``if'' direction, given (\ref{Border_deterministic}), we can simply construct the DSIC auction by using the two curves, $q_2 = \hat{x}_1(q_1)$ and $q_1 = \hat{x}_2(q_2)$, to partition the plane $A$.
    The condition (\ref{Border_deterministic}) guarantees that the two curves do not cross each other, thus the construction is feasible.
    Moreover, if (\ref{Border_deterministic2}) holds, then the two curves fully overlap, all points are colored either 1 or 2.
\end{proof}

\subsection{Proof of Corollary \ref{lem:corollary}}
\begin{proof}
    Consider the 2-buyer case with i.i.d. continuous priors and the following randomized BIC auction:
    \begin{itemize}
        \item with probability $p$, run VCG auction.
        \item with probability $1 - p$, send the item for free to buyer 1.
    \end{itemize}
    We analyze it in the quantile space.
    Then, for any quantile $q = F(v)$, the sum of the interim allocation is $2pq + 1 - p$.
    We assume that there exists a deterministic DSIC auction with interim allocation profile $(\hat{x}_1(q), \hat{x}_2(q))$ such that
    \begin{align*}
        \hat{x}_1(q) + \hat{x}_2(q) = 2pq + 1 - p.
    \end{align*}
    Since
    \begin{align*}
        \int_0^1 (\hat{x}_1(q) + \hat{x}_2(q)) \mathrm{d} q = \int_0^1 (2pq + 1 - p) \mathrm{d} q = 1,
    \end{align*}
    so the item is allocated with probability 1.
    By Theorem ~\ref{Thm:Border_deterministic}, we have $\hat{x}_2(\hat{x}_1(q)) = q$ for all $q$.
    Since $\hat{x}_1(0) + \hat{x}_2(0) = 1 - p$, either $\hat{x}_1(0)$ or $\hat{x}_2(0)$ must be equal to 0.
    Otherwise, if $\hat{x}_1(0) > 0$ and $\hat{x}_2(0) > 0$, then $\hat{x}_2(\hat{x}_1(0)) \geq \hat{x}_2(0) > 0$, contradiction.
    Thus, without loss of generality, assume that $\hat{x}_1(0) = 1 - p$ and $\hat{x}_2(0) = 0$.
    
    Therefore, we can get a sequence:
    \begin{align*}
        u_{n} + u_{n + 2} = 2pu_{n + 1} + 1 - p, u_1 = 0, u_2 = 1 - p,
    \end{align*}
    and $\hat{x}_1(u_n) = u_{n + 1}$, $\hat{x}_2(u_n) = u_{n - 1}$.
    
    So if the sequence $\{u_n\}$ never reaches 1 exactly, then the interim allocation is not feasible.
    It is not hard to check that when $p = 0.75$, $u_{5} = 1.0312 > 1$.
    The corollary is proved.  
\end{proof}


\end{document}